\title{A complete axiomatisation of reversible Kleene lattices} 
\author{Paul Brunet}{University College London, United Kingdom \and \url{paul.brunet-zamansky.fr} }{paul@brunet-zamansky.fr}{https://orcid.org/0000-0002-9762-6872}{}
\authorrunning{P. Brunet}
\keywords{Kleene algebra, language algebra, completeness theorem, axiomatisation}
\DeclareSymbolFont{stmry}{U}{stmry}{m}{n}
\DeclareMathDelimiter\llbracket{\mathopen}{stmry}{"4A}{stmry}{"71}
\DeclareMathDelimiter\rrbracket{\mathclose}{stmry}{"4B}{stmry}{"79}
\DeclareMathAlphabet{\mathpzc}{OT1}{pzc}{m}{it}
\newcommand\Mid{\mathrel{|}}
\newcommand\Coloneqq{\mathrel{\mathop{::}}=}
\newcommand\paren[1]{\left(#1\right)}
\newcommand\set[1]{\left\{#1\right\}}
\newcommand\setcompr[2]{\left\{#1~\middle|~#2\right\}}
\newcommand\tuple[1]{\left\langle#1\right\rangle}
\newcommand\eqdef{\mathrel{\mathop{:}}=}
\newcommand\pset[1]{\mathcal{P}\paren{#1}}
\newcommand\fpset[1]{\mathcal{P}_f\paren{#1}}
\newcommand\Nat{\mathbb{N}}
\renewcommand\epsilon\varepsilon
\newcommand\Lang[1]{\mathcal{L}\tuple{#1}}
\newcommand\mirror[1]{\overline{#1}}
\newcommand\sem[2][\sigma]{\left\llbracket {#2}\right\rrbracket_{#1}}
\newcommand\Eklc[1]{\mathbb E_{#1}}
\newcommand\Eklcm[1]{\mathbb E'_{#1}}
\newcommand\Etop[1]{\mathbb E^\top_{#1}}
\newcommand\clean[1]{\mathbb C_{#1}}
\newcommand\Eklm[1]{\mathbb E^-_{#1}}
\newcommand\term[1][X]{\mathbb{T}_{#1}}
\newcommand\valid{\mathbb V}
\newcommand\NF{\mathbb {NF}}
\newcommand\nf[1]{\mathcal {N}\paren{#1}}
\newcommand\interone{\mathbb{I}}
\newcommand\test[1]{\theta_{#1}}
\newcommand\erase[1]{\phi\paren{#1}}
\newcommand\reduce[2]{\tuple{#2}_{#1}}
\newcommand\conv[1]{\overline{#1}}
\newcommand\argument{\square}
\newcommand\comb{\Upsilon}
\newcounter{mysubtable}
\newcounter{myaxiom}
\newcommand\axlabel[1]{%
  \stepcounter{myaxiom}%
  \label{#1}%
  \tag{\thetable\alph{mysubtable}.{\scriptsize{\arabic{myaxiom}}}}%
}
\newenvironment{tableequations}{%
  \setcounter{myaxiom}{0}%
  \setcounter{mysubtable}{\value{subtable}}%
  \stepcounter{mysubtable}%
  \ignorespaces
}{%
  \ignorespacesafterend
}
\numberwithin{equation}{section}
\begin{document}

\maketitle

\begin{abstract}
  We consider algebras of languages over the signature of reversible Kleene lattices, that is the regular operations (empty and unit languages, union, concatenation and Kleene star) together with intersection and mirror image.
  We provide a complete set of axioms for the equational theory of these algebras.
  This proof was developed in the proof assistant Coq.
\end{abstract}

\section{Introduction}
We are interested in algebras of languages, equipped with the constants empty language~($0$), unit language~($1$, the language containing only the empty word), the binary operations of union~($+$), intersection~($\cap$), and concatenation~($\cdot$), and the unary operations of Kleene star~($\argument^\star$) and mirror image~($\conv\argument$).
It is convenient in this paper to see the Kleene star as a derived operator $e^\star\eqdef 1+e^+$ with the operator $e^+$ representing the non-zero iteration.
We call these algebras \emph{reversible Kleene lattices}.
Given a finite set of variables~$X$, and two terms~$e,f$ built from variables and the above operations, we say that the equation~$e\simeq f$ is \emph{valid} if the corresponding equality holds universally.

In a previous paper~\cite{brunetReversibleKleeneLattices2017a} we have presented an algorithm to test the validity of such equations, and shown this problem to be \textsc{ExpSpace}-complete.
However, we had left open the question of the axiomatisation of these algebras.
We address it now, by providing in the current paper a set of axioms from which every valid equation can be derived.

Several fragments of this algebra have been studied:
\begin{description}
\item[Kleene algebra (KA):]
  if we restricts ourselves to the operators of regular expressions ($0$, $1$, $+$, $\cdot$, and $\argument^+$), then several axiomatisation have been proposed by Conway\cite{Conway}, before being shown to be complete by Krob~\cite{krobCompleteSystemsBrational1991} and Kozen~\cite{kozenCompletenessTheoremKleene1994}.
\item[Kleene algebra with converse:]
  if we add to KA the mirror operation, then the previous theorem can be extended by switching to a duplicated alphabet, with a letter $a'$ denoting the mirror of the letter $a$.
  A small number of identities may be added to KA to get a complete axiomatisation~\cite{BES95}.
\item[Identity-free Kleene lattices:]
  this algebra stems from the operators $0$, $+$, $\cdot$, $\cap$ and $\argument^+$.
  In a recent paper~\cite{doumaneCompletenessIdentityfreeKleene2018a} Doumane and Pous provided a complete axiomatisation of this algebra.
\end{description}

The present work is then an extension of identity-free Kleene lattices, by adding unit and mirror image.
We provide in \Cref{tab:axioms} a set of axioms which we prove to be complete for the equational theory of language algebra, by reducing to the the completeness theorem of~\cite{doumaneCompletenessIdentityfreeKleene2018a}. This proof has been formalised in \href{http://coq.inria.fr}{Coq}.

The paper is organised as follows.
In \Cref{sec:prelim}, we introduce some notations and define the various types of expressions used in the paper.
We present our axioms and state our main theorem.
In \Cref{sec:rm-nil} we deal with a technical lemma having to do with the treatment of the empty word.
We proceed in \Cref{sec:mirror} to extend the theorem of~\cite{doumaneCompletenessIdentityfreeKleene2018a} with the mirror image operator.
\Cref{sec:tests} studies in details terms of the algebra that are below the constant $1$, as those play a crucial role in the main proof.
We present the proof of our main result in \Cref{sec:rkl}.
We conclude in \Cref{sec:top} by a discussion on an operator that is missing from our signature, namely constant $\top$ denoting the full language.

\section{Preliminaries}
\label{sec:prelim}

\subsection{Sets, words, and languages}
\label{sec:prelim:words}
Given a set $X$, we write $\pset X$ for the set of subsets of $X$ and $\fpset X$ for the set of finite subsets of $X$.
We will denote the two-elements boolean set $2$.
For two sets $X,Y$, we write $X\times Y$ for their Cartesian product, $X\cup Y$ for their union, and $X\cap Y$ for their intersection.
The empty set is denoted by $\emptyset$.
We will use the notation $f(A)$ for a set $A\subseteq X$ and a function $f:X\to Y$ to represent the set $\setcompr{y\in Y}{\exists a\in A:f(a) = y}=\setcompr{f(a)}{a\in A}$.

Let $\Sigma$ be an arbitrary alphabet (set), the words over $\Sigma$ are finite sequences of elements from $\Sigma$.
The set of all words is written $\Sigma^\star$, and the empty word is written $\epsilon$.
The concatenation of two words $u,v$ is simply denoted by $uv$.
The mirror image of a word $u$, obtained by reading it backwards, is written $\mirror u$.
For instance $\mirror{abc}$ is the word $cba$.

A language is a set of words, that is an element of $\Lang\Sigma\eqdef\pset{\Sigma^\star}$.
We will also use the symbol $\epsilon$ to denote the unit language $\set\epsilon$.
The concatenation of two languages $L$ and $M$, denoted by $L\cdot M$, is obtained by lifting pairwise the concatenation of words: it contains exactly those words that can be obtained as a concatenation $uv$ where $\tuple{u,v}\in L\times M$.
Similarly the mirror image of a language $L$, denoted by $\mirror L$, is the set of mirror images of words from $L$.
We write $L^n$ when $L\in\Lang\Sigma$ and $n\in\Nat$ for the iterated concatenation, defined by induction on $n$ by $L^0 \eqdef \epsilon$ and $L^{n+1}\eqdef L\cdot L^n$.
The language $L^+$ is the union of all non-zero iterations of $L$, i.e. $L^+\eqdef \bigcup_{n>0}L^n$.

\subsection{Terms: syntax and semantics}
\label{sec:prelim:terms}

Throughout this paper, we will consider expressions over various signatures which we list here.
We fix a set of variables $X$, and let $x,y,...$ range over $X$.
\begin{description}
\item[Expressions:] $e,f\in \Eklc X\Coloneqq x \Mid 0\Mid 1 \Mid e + f \Mid e \cdot f \Mid e \cap f \Mid e^+ \Mid \conv{e}$;
\item[One-free expressions:] $e,f\in \Eklcm X\Coloneqq x \Mid 0 \Mid e + f \Mid e \cdot f \Mid e \cap f \Mid e^+ \Mid \conv{e}$;
\item[Simple expressions:] $e,f\in \Eklm X\Coloneqq x \Mid 0 \Mid e + f \Mid e \cdot f \Mid e \cap f \Mid e^+$;
\end{description}

\begin{table}
  \centering
  \begin{subtable}[b]{.405\textwidth}
    \begin{tableequations}
      \begin{align}
        e+f &= f+e\axlabel{ax:plus-com}\\
        e+(f+g) &= (e+f)+g \axlabel{ax:plus-ass}\\
        e+0 &= e  \axlabel{ax:plus-0}\\
        e\cap f  &=f\cap e\axlabel{ax:inter-comm}\\
        e \cap  e &= e\axlabel{ax:inter-idem}\\
        e\cap (f \cap  g) &= (e\cap f)\cap g\axlabel{ax:inter-assoc}\\
        (e + f)\cap g &= e\cap g + f\cap g\axlabel{ax:plus-inter}\\
        (e\cap f)+e &= e\axlabel{ax:inter-plus}
      \end{align}
    \end{tableequations}
    \caption{Distributive lattice}
    \label{tab:lattice}
  \end{subtable}\hfill%
  \begin{subtable}[b]{.58\textwidth}
    \begin{tableequations}
      \begin{align}
        e\cdot (f \cdot  g) &= (e\cdot f)\cdot g\axlabel{ax:seq-assoc}\\
        e\cdot 0 &= 0= 0\cdot e \axlabel{ax:seq-0}\\
        (e + f)\cdot g &= e\cdot g + f\cdot g\axlabel{ax:plus-seq}\\
        e\cdot (f + g) &= e\cdot f + e\cdot g\axlabel{ax:seq-plus}\\
        e^+ &= e + e\cdot e^+\axlabel{ax:iter-left}\\
        e^+ &= e + e^+ \cdot e\axlabel{ax:iter-right}\\
        e\cdot f + f = f &\Rightarrow e^+\cdot f + f = f\axlabel{ax:left-ind}\\
        f \cdot  e + f = f &\Rightarrow f \cdot e^+ + f = f\axlabel{ax:right-ind}
      \end{align}
    \end{tableequations}   \caption{Concatenation and iteration}
    \label{tab:concat}
  \end{subtable}
  \begin{subtable}[b]{.405\textwidth}
    \begin{tableequations}
      \begin{align}
        \conv{{\conv{e }}} &= e\axlabel{ax:conv-conv}\\
        \conv{{e + f}} &= \conv{e } + \conv{f }\axlabel{ax:conv-plus}\\
        \conv{{e \cdot  f}} &= \conv{f } \cdot  \conv{e }\axlabel{ax:conv-seq}\\
        \conv{{e\cap f}} &= \conv{e } \cap  \conv{f }\axlabel{ax:conv-inter}\\
        \conv{{e^+}} &= {\conv{e }}^+\axlabel{ax:conv-iter}
      \end{align}
    \end{tableequations}
    \caption{Mirror image}
    \label{tab:mirror}
  \end{subtable}\hfill%
  \begin{subtable}[b]{.58\textwidth}
    \begin{tableequations}
      \begin{align}
        1\cdot e &= e=e\cdot 1\axlabel{ax:seq-1}\\
        1 \cap  (e\cdot f) &= 1 \cap  (e \cap  f)\axlabel{ax:test-seq-inter}\\
        1 \cap  \conv{e } &= 1 \cap  e\axlabel{ax:test-conv}\\
        (1 \cap  e)\cdot f &= f\cdot (1 \cap  e)\axlabel{ax:test-seq-com}\\
        ((1 \cap  e)\cdot f) \cap  g &= (1 \cap  e)\cdot (f \cap  g)\axlabel{ax:test-inter}\\
        (g + (1 \cap  e) \cdot  f)^+ &= g^+ + (1 \cap  e) \cdot  (g + f)^+\axlabel{ax:test-iter}
      \end{align}
    \end{tableequations}
    \caption{Unit}
    \label{tab:subunits}
  \end{subtable}
  \caption{Axioms of reversible Kleene lattices}
  \label{tab:axioms}
\end{table}
\begin{table}
  \begin{subtable}[b]{.405\linewidth}
    \begin{tableequations}
      \begin{align}
        e+e&\equiv e\axlabel{eq:7}\\
        e\cap 0&\equiv 0\axlabel{eq:8}\\
        e\cap(e+f)&\equiv e\axlabel{eq:9}
      \end{align}
    \end{tableequations}
    \caption{Lattice laws}
  \end{subtable}\hfill%
  \begin{subtable}[b]{.58\linewidth}
    \begin{tableequations}
      \begin{align}
        e^+\cdot e^+&\leqq e^+\axlabel{eq:10}\\
        \paren{e^+}^+&\equiv e^+\axlabel{eq:11}\\
        \paren{1+e}^+&\equiv 1+e^+\axlabel{eq:12}
      \end{align}
    \end{tableequations}
    \caption{Iteration}
  \end{subtable}
  \begin{subtable}[b]{.405\linewidth}
    \begin{tableequations}
      \begin{align}
        \conv 0&\equiv 0\axlabel{eq:13}\\
        \conv 1&\equiv 1\axlabel{eq:14}\\
        0^+&\equiv 0\axlabel{eq:15}\\
        1^+&\equiv 1\axlabel{eq:16}
      \end{align}
    \end{tableequations}
    \caption{Constants}
  \end{subtable}\hfill%
  \begin{subtable}[b]{.58\linewidth}
    \begin{tableequations}
      \begin{align}
        e \leqq g \Rightarrow f \leqq g &\Rightarrow e + f \leqq g\axlabel{eq:17}\\
        g \leqq e \Rightarrow g \leqq e &\Rightarrow g \leqq e\cap f\axlabel{eq:18}\\
        e\leqq f &\Leftrightarrow e\cap f \equiv e\axlabel{eq:19}\\
        1\leqq e\cdot f&\Leftrightarrow 1\leqq e~\wedge~ 1\leqq f\axlabel{eq:20}
      \end{align}
    \end{tableequations}
    \caption{Reasoning rules}
  \end{subtable}
  \caption{Some consequences of the axioms}
  \label{tab:derivable}
\end{table}
We will use various sets of axioms, depending the signature.
All of the axioms under consideration are listed in~\Cref{tab:axioms}.
Axioms in \Cref{tab:lattice,tab:concat} are borrowed from~\cite{doumaneCompletenessIdentityfreeKleene2018a}, those in~\Cref{tab:mirror} from \cite{BES95} and \Cref{tab:subunits} is inspired from \cite{andrekaEquationalTheoryKleene2011}.
We use these axioms to generate equivalence relations over terms.
For a type of expressions $\term\in\set{\Eklc X,\Eklcm X,\Eklm X}$, the \emph{axiomatic equivalence relation}, written $\equiv$ is the smallest congruence on $\term$ containing those axioms in~\Cref{tab:axioms} that only use symbols from the signature of $\term$.
This means that for $\Eklm X$ we use the axioms from \Cref{tab:lattice,tab:concat}, for $\Eklcm X$ we add those from \Cref{tab:mirror} and for $\Eklc X$ we keep all of the axioms of \Cref{tab:axioms}.
We will use the shorthand $e\leqq f$ to mean $e+f\equiv f$.
This ensures that $\leqq$ is a partial order with respect to $\equiv$.
We list in \Cref{tab:derivable} some statements that are provable from the axioms.
Interestingly the idempotency of $+$ (equation \eqref{eq:7}), which is usually an axiom, is here derivable from \eqref{ax:inter-idem} and~\eqref{ax:inter-plus}.

Given an expression $e\in\term$, a set $\Sigma$, and a map $\sigma:X\to\Lang\Sigma$, we may interpret $e$ as a language over $\Sigma$ using the following inductive definition:
\begin{align*}
  \sem x &\eqdef \sigma(x)
  &\sem {e+f} &\eqdef \sem e \cup \sem f
  &\sem {e^+} &\eqdef \sem e^+
  \\
  \sem 0 &\eqdef \emptyset
  &\sem {e\cdot f} &\eqdef \sem e \cdot \sem f
  &\sem {\conv{e}} &\eqdef \mirror{\sem e}
  \\
  \sem 1 &\eqdef \epsilon
  &\sem {e\cap f} &\eqdef \sem e \cap \sem f
\end{align*}
The \emph{semantic equivalence} and \emph{semantic containment} relations on $\term$, respectively written $\simeq$ and $\lesssim$, are defined as follows:
\begin{align*}
e\simeq f &\Leftrightarrow \forall \Sigma,\,\forall \sigma:X\to\Lang\Sigma,\,\sem e = \sem f.\\
e\lesssim f &\Leftrightarrow \forall \Sigma,\,\forall \sigma:X\to\Lang\Sigma,\,\sem e \subseteq \sem f.
\end{align*}

The main result of this paper is a completeness theorem for reversible Kleene lattices:
\begin{restatable*}[Main result]{theorem}{complrkl}\label{thm:compl-rkl}
  $\forall e,f\in\Eklc X,\; e\equiv f \Leftrightarrow e\simeq f$.
\end{restatable*}
Since all of the axioms in~\Cref{tab:axioms} are sound for languages, we know that the implication from left to right holds.
This paper will thus focus on the converse implication, and will proceed in several steps.
Our starting point will be the recently published completeness theorem for identity-free Kleene lattices~\cite{doumaneCompletenessIdentityfreeKleene2018a}:
\begin{theorem}\label{thm:compl-klm}
  $\forall e,f\in\Eklm X,\; e\equiv f \Leftrightarrow e\simeq f$.%
\end{theorem}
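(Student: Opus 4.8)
The plan is to establish the non-trivial direction $e\simeq f\Rightarrow e\equiv f$; the reverse implication is soundness, verified by checking each axiom of \Cref{tab:lattice,tab:concat} against the interpretation $\sem{\argument}$. Since $e\simeq f$ iff $e\lesssim f$ and $f\lesssim e$, and likewise $e\equiv f$ iff $e\leqq f$ and $f\leqq e$, it suffices to show $e\lesssim f\Rightarrow e\leqq f$. The central device is a combinatorial model: to each simple expression I associate a set $\mathcal G(e)$ of finite directed, edge-labelled, two-terminal \emph{series--parallel} graphs, with $+$ read as union, $\cdot$ as series composition, $\cap$ as parallel composition (identifying the two sources and the two targets), and $\argument^+$ as the union of all positive series powers. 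The bridge to the language semantics is a Kleene-style correspondence: $e\lesssim f$ holds for all $\sigma$ iff for every $G\in\mathcal G(e)$ there is some $H\in\mathcal G(f)$ admitting a graph homomorphism $H\to G$ preserving source, target, and edge labels. I would prove this bridge in both directions, reading off a member of $\sem{e}$ from each $G$ under a canonical syntactic valuation so that the absence of any homomorphism yields a word separating $\sem{e}$ from $\sem{f}$, and conversely transferring membership along any given homomorphism.

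I would first dispatch the iteration-free fragment, where every $\mathcal G(e)$ is finite. Here each individual graph $G$ is the graph language of a canonical term $t_G$ built from $\cdot$ and $\cap$ alone, and one checks that $t_G\leqq e$ is derivable whenever $G\in\mathcal G(e)$, by induction on $e$ using the distributive-lattice and monoid axioms of \Cref{tab:lattice,tab:concat}. The crux is the converse: a homomorphism $H\to G$ must be turned into a derivation $t_G\leqq t_H$. Intuitively the homomorphism exhibits $G$ as ``more constrained'' than $H$, identifying parts of $H$ inside $G$; algebraically this collapsing is realised by repeated use of idempotency \eqref{ax:inter-idem}, commutativity and associativity of $\cap$, together with the semi-distributivity $(a\cap b)\cdot(c\cap d)\leqq (a\cdot c)\cap(b\cdot d)$, which is itself derivable by monotonicity from the lattice and monoid axioms. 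Chaining $t_G\leqq t_H\leqq f$ and taking the finite supremum over $G\in\mathcal G(e)$ then delivers $e\leqq f$.

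To lift this to the full fragment I would represent $\mathcal G(e)$ by a finite automaton over series--parallel graphs---a Petri-automaton-style device whose runs enumerate exactly the graphs of $e$---so that the now-infinite graph language is finitely presented and the iteration $\argument^+$ is reflected by cycles. The containment $e\lesssim f$ then becomes the existence of a simulation between the two automata producing, along every run of the $e$-side, a homomorphic witness on the $f$-side. The main obstacle, and the genuinely hard part of the argument, is converting such a simulation into a \emph{finite} derivation even though it certifies an \emph{infinite} family of homomorphisms: this is precisely where the induction axioms \eqref{ax:left-ind} and \eqref{ax:right-ind} become indispensable. I would isolate a loop invariant stating that the term accumulated along any partial run of the $e$-automaton is provably below the corresponding term on the $f$-side, show that this invariant is preserved by each cycle of the automaton using the per-step inequalities supplied by the iteration-free analysis, and then discharge the infinitely many iterations by a single application of Kleene induction. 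The delicate bookkeeping---aligning the loops of the two automata so that the invariant is genuinely inductive---is where most of the technical effort concentrates.
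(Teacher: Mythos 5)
The paper does not prove this statement at all: \Cref{thm:compl-klm} is imported wholesale from Doumane and Pous \cite{doumaneCompletenessIdentityfreeKleene2018a}, and the only content the paper attaches to it is the remark that the cited theorem concerns binary relations, the transfer to language semantics being delegated to \cite{andrekaEquationalTheoryKleene2011}. What you have written is, in outline, a reconstruction of the proof inside that cited work: series--parallel graph languages with $\cap$ as parallel composition, the homomorphism characterisation of containment (due to Andr\'eka and Bredikhin in the finite, iteration-free case), Petri-automata to finitely present the infinite graph languages arising from $\argument^+$, and the conversion of a simulation between automata into a derivation via the induction rules \eqref{ax:left-ind} and \eqref{ax:right-ind}. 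So relative to the present paper your route differs only in that you propose to reprove a black box it deliberately cites; relative to the actual source of the theorem it is the same route, compressed to a plan whose hardest component (the ``delicate bookkeeping'' of aligning loops and extracting a finite proof from an infinite family of homomorphisms) is precisely the technical bulk of \cite{doumaneCompletenessIdentityfreeKleene2018a}, not something your sketch supplies.

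There is also one substantive inaccuracy to flag. You assert the homomorphism bridge directly for the \emph{language} semantics ($e\lesssim f$ iff every $G\in\mathcal G(e)$ is covered by a homomorphism from some $H\in\mathcal G(f)$), with the counter-model direction dispatched by ``reading off a member of $\sem{e}$ under a canonical syntactic valuation''. That reading-off works for relational semantics, where the separating structure can be taken to be the graph $G$ itself; languages, however, only supply \emph{linear} counter-models (words), so producing a separating word from the absence of a homomorphism is a genuinely nontrivial theorem --- exactly the coincidence of language-validity and relational validity for this identity-free signature established in \cite{andrekaEquationalTheoryKleene2011}, which the paper's remark invokes explicitly, and which fails once $1$ is added to the signature (that failure being the reason the rest of the present paper exists). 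Your sketch silently merges this coincidence into the bridge lemma as if it were routine; as a self-contained proof that is a gap, though the overall architecture you describe is the correct and, indeed, the known one.
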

\begin{remark}
  In~\cite{doumaneCompletenessIdentityfreeKleene2018a}, this theorem is established for interpretations of terms as binary relations instead of languages.
  However both semantic equivalences coincide for this signature~\cite{andrekaEquationalTheoryKleene2011}.
\end{remark}
\section{A remark about the empty word}
\label{sec:rm-nil}

In several places in the proof, it makes some difference whether or not the empty word belongs to the language of some one-free expression.
We show here one way one might manipulate this property, that will be of use later on.
\begin{restatable}{proposition}{rmnil}\label{prop:rm-nil}
  Given an alphabet $\Sigma$, a map $\sigma:X\to\Lang\Sigma$ and a set of variables $\mathcal X\subseteq X$, there is an alphabet $\Sigma'$, and two maps $\sigma':X\to\Lang{\Sigma'}$ and $\erase\argument:\Sigma'^\star\to\Sigma^\star$ such that:
  \begin{align*}
    \forall a\in \mathcal X,\;\epsilon\notin\sigma'(a)&&
    \forall e\in\Eklcm{X},\;\sem e =\erase{\sem[\sigma']e\setminus\epsilon}.
    \end{align*}
\end{restatable}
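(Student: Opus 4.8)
The plan is to realise the original interpretation $\sigma$ as the homomorphic image of an $\epsilon$-free interpretation over a one-letter-larger alphabet. Let $\natural$ be a fresh letter, put $\Sigma'\eqdef\Sigma\cup\set{\natural}$, and let $\phi:\Sigma'^\star\to\Sigma^\star$ be the monoid homomorphism that erases every occurrence of $\natural$ and fixes $\Sigma$ pointwise; this is the map $\erase\argument$. I would then define, for every $a\in X$, $\sigma'(a)\eqdef\phi^{-1}\paren{\sigma(a)}\setminus\set{\epsilon}$, i.e.\ all the $\natural$-paddings of the words of $\sigma(a)$, the empty one excepted. The first requirement is then immediate, and in fact $\epsilon\notin\sigma'(a)$ for \emph{every} $a$, not only those in $\mathcal X$. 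Moreover, because $\Eklcm X$ has no constant $1$ and no variable is interpreted by a language containing $\epsilon$, a routine induction shows that $\epsilon\notin\sem[\sigma']e$ for every $e\in\Eklcm X$ (each of $+,\cdot,\cap,\argument^+,\conv\argument$ preserves $\epsilon$-freeness). Hence $\sem[\sigma']e\setminus\epsilon=\sem[\sigma']e$, and the statement reduces to proving $\sem e=\erase{\sem[\sigma']e}$.

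The inclusion $\erase{\sem[\sigma']e}\subseteq\sem e$ is the easy half. It follows by induction on $e$, using that $\phi$ is a homomorphism (so $\erase\argument$ commutes with $+,\cdot,\argument^+,\conv\argument$) together with the general fact $\erase{A\cap B}\subseteq\erase A\cap\erase B$ for intersection; the base case $\erase{\sigma'(a)}\subseteq\sigma(a)$ holds by construction. Alternatively one may note that $\phi^{-1}$ commutes with all operations of $\Eklcm X$ (the only delicate point being concatenation, where $\phi^{-1}(LM)=\phi^{-1}(L)\cdot\phi^{-1}(M)$ since a $\natural$-padding of $\ell m$ can always be cut at the image boundary), whence $\sem[\sigma']e\subseteq\phi^{-1}\paren{\sem e}$ and the inclusion follows.

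The reverse inclusion $\sem e\subseteq\erase{\sem[\sigma']e}$ is the heart of the argument, and intersection is the main obstacle. I would prove by induction on $e$ that every $w\in\sem e$ has a preimage $u\in\sem[\sigma']e$ with $\erase u=w$. The cases $0$, variables, $+$, $\cdot$, $\argument^+$ and $\conv\argument$ are routine: for a variable, $w$ itself works when $w\neq\epsilon$, while $\natural$ is a nonempty representative of $\epsilon$; concatenation and iteration merely multiply the representatives supplied by the induction hypothesis. The trouble is $e_1\cap e_2$: the hypothesis yields $u_1\in\sem[\sigma']{e_1}$ and $u_2\in\sem[\sigma']{e_2}$ both erasing to $w$, but these two paddings of $w$ need not agree, and there is no common homomorphic section to fall back on, since any section of $\phi$ must send $\epsilon$ to a nonempty word and so cannot be multiplicative.

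I would resolve this with a structural lemma, proved by a separate induction: for every $e\in\Eklcm X$ the language $\sem[\sigma']e$ is closed under insertion of $\natural$'s. This holds for variables by construction and is preserved by every operation (for $\cap$, inserting a $\natural$ into a word lying in two insertion-closed languages keeps it in both). Granting this, the intersection case is settled by taking a common refinement: writing $w=a_1\cdots a_n$, the two preimages are $u_1=\natural^{k_0}a_1\natural^{k_1}\cdots a_n\natural^{k_n}$ and $u_2=\natural^{l_0}a_1\natural^{l_1}\cdots a_n\natural^{l_n}$, and $u\eqdef\natural^{\max(k_0,l_0)}a_1\natural^{\max(k_1,l_1)}\cdots a_n\natural^{\max(k_n,l_n)}$ is obtained from each $u_i$ by inserting $\natural$'s; hence $u\in\sem[\sigma']{e_1}\cap\sem[\sigma']{e_2}=\sem[\sigma']{e_1\cap e_2}$ and $\erase u=w$. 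This closes the induction, and with it the proposition.
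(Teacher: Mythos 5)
Your proof is correct and follows essentially the same route as the paper: a fresh padding letter, the erasing homomorphism, closure of every language $\sem[\sigma']{e}$ under insertion of the padding letter, and a least common padding to resolve the intersection case (the paper packages these last two ingredients as upward closure under an order $\sqsubseteq$ together with the existence of joins of words with equal erasure). The only divergence is that you strip $\epsilon$ from every $\sigma'(a)$ rather than only for $a\in\mathcal X$, which makes the final $\setminus\epsilon$ step vacuous instead of requiring the paper's observation that $\epsilon\sqsubseteq\bullet$ forces $\bullet\in\sem[\sigma']e$ whenever $\epsilon\in\sem[\sigma']e$.
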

\begin{proof}
  We fix $\Sigma$, $\sigma$, and $\mathcal X$ as in the statement.
  Let $\bullet$ be some fresh letter, we set $\Sigma'$ to be $\Sigma\cup\set\bullet$.
  For a word $u\in\Sigma'^\star$, we define $\erase u\in\Sigma^\star$ to be the word obtained by removing every instance of $\bullet$ from $u$.
  Finally, $\sigma'$ is defined as follows:
  \[
    \sigma'(x)\eqdef\setcompr{u}{\erase u\in \sigma(x) \wedge \paren{x\in \mathcal X\Rightarrow u\neq \epsilon}}.
  \]
  It is straightforward to check that $\erase{\sigma'(x)}=\sigma(x)$ for any variable $x$.
  Therefore we only need to check that this property is preserved by the operators of one-free expressions.
  For any languages $L,M$, the following distributivity laws hold:
  \begin{align*}
    \erase{\conv{L}}&=\conv{\erase{L}}
    &\erase{L\cdot M}&=\erase L\cdot\erase M\\
    \erase{L^+}&=\erase L^+
    &\erase{L\cup M}&=\erase L \cup\erase M
  \end{align*}
  However, it is not the case in general that $\erase{L\cap M}=\erase L\cap\erase M$.
  To make the induction go through, we will need to show that this identity holds for all the languages generated from the languages $\sigma'(x)$ by the operations $0,\cdot,+,\cap,\argument^+,\conv\argument$.
  This is achieved by identifying some sufficient condition for $\erase{L\cap M}=\erase L\cap\erase M$, and showing that this condition is satisfied by every language  of the shape $\sem[\sigma']e$.
  Let us define the ordering $\sqsubseteq$ on words over $\Sigma'$:
  \begin{align*}
    \infer{}{u\sqsubseteq u}&&
    \infer{}{u\sqsubseteq \bullet u}&&
    \infer{u\sqsubseteq v\and v\sqsubseteq w}{u\sqsubseteq w}&&
    \infer{u\sqsubseteq v\and u'\sqsubseteq v'}{uu'\sqsubseteq vv'}
    \end{align*}
  $\sqsubseteq$ is a partial order and satisfies the following properties:
  \begin{align}
    &u\sqsubseteq v \Rightarrow \erase u = \erase v\label{eq:1}\\
    &u\sqsubseteq v \Rightarrow \mirror u\sqsubseteq \mirror v\label{eq:2}\\
    &u_1u_2\sqsubseteq v \Rightarrow \exists v_1,v_2:v= v_1v_2\wedge u_1\sqsubseteq v_1\wedge u_2\sqsubseteq v_2\label{eq:3}\\
    &\erase u = \erase v\Rightarrow\exists w, u\sqsubseteq w\wedge v\sqsubseteq w\wedge\paren{\forall w', u\sqsubseteq w'\wedge v\sqsubseteq w'\Rightarrow w\sqsubseteq w'}.\label{eq:4}
  \end{align}
  \eqref{eq:1} and \eqref{eq:4} tell us that each equivalence class of the relation $\setcompr{\tuple{u,v}}{\erase u=\erase v}$ forms a join-semilattice.
  The proofs of these properties being somewhat technical, we omit them here.
  The interested reader may refer to the Coq formalisation for details.

  Consider now those languages over $\Sigma'$ that are upwards-closed with respect to $\sqsubseteq$, that is to say languages $L$ such that whenever $u\in L$ and $u\sqsubseteq v$, then $v\in L$.
  Clearly $\sigma'(x)$ is closed for any variable $x$.
  Since the property ``being closed'' is preserved by each operation in the signature of $\Eklcm X$, we deduce that for any expression $e\in\Eklcm X$ the language $\sem[\sigma']e$ is closed.
  
  Thankfully, for closed languages the missing identity $\erase{L\cap M}=\erase L\cap\erase M$ holds.
  Thus we may conclude by induction on the expressions that $\sem e=\erase{\sem[\sigma'] e}$.
  For the last step, notice that $\epsilon\sqsubseteq\bullet$ and $\erase\epsilon=\erase\bullet$.
  Since $\sem[\sigma']e$ is closed, if $\epsilon\in\sem[\sigma']e$ then $\bullet\in\sem[\sigma']e$, thus $\erase{\sem[\sigma']e\setminus\epsilon}=\erase{\sem[\sigma']e}=\sem e$.
\end{proof}

By setting the set $\mathcal X$ in the previous proposition to the full set $X$, we get the straightforward corollary, which will prove useful in the next section.
\begin{corollary}\label{cor:rm-nil}
  Let $e$ be a one-free expression, then for any expression $f\in\Eklc X$ we have
  \[e\lesssim f \Leftrightarrow\forall \Sigma,\,\forall \sigma:X\to\Lang\Sigma,\,\epsilon\notin\bigcup_{x\in X}\sigma(x) \Rightarrow \sem e \subseteq \sem f.\]
\end{corollary}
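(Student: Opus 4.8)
The plan is to derive Corollary~\ref{cor:rm-nil} directly from \Cref{prop:rm-nil} by instantiating the free set of variables $\mathcal X$ with the whole of $X$. The forward implication is the trivial one: if $e\lesssim f$ holds for \emph{all} interpretations, then in particular it holds for those interpretations in which no variable is mapped to a language containing $\epsilon$, so the right-hand side is an immediate consequence. The real content is the converse, where we must promote a containment that is known only for ``$\epsilon$-free'' interpretations to a containment valid for arbitrary interpretations.

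For the converse, I would fix an arbitrary alphabet $\Sigma$ and map $\sigma:X\to\Lang\Sigma$, and set $\mathcal X = X$ in \Cref{prop:rm-nil}. This yields an alphabet $\Sigma'$, an interpretation $\sigma':X\to\Lang{\Sigma'}$, and an erasing map $\erase\argument$ with the two key properties: first, that $\epsilon\notin\sigma'(a)$ for \emph{every} variable $a\in X$, so $\sigma'$ is precisely an $\epsilon$-free interpretation of the kind the right-hand side quantifies over; and second, that for every one-free expression $g$ we have $\sem[\sigma]g=\erase{\sem[\sigma']g\setminus\epsilon}$. Applying the hypothesis to the interpretation $\sigma'$ over $\Sigma'$ gives $\sem[\sigma']e\subseteq\sem[\sigma']f$. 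I would then transport this inclusion across $\erase\argument$ to recover the desired inclusion over $\sigma$.

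The slightly delicate point is that the proposition's second property is stated only for one-free expressions $g\in\Eklcm X$, whereas $f$ in the corollary ranges over the full signature $\Eklc X$ (it may contain the constant $1$). Only $e$ is assumed one-free. So I cannot directly write $\sem[\sigma]f=\erase{\sem[\sigma']f\setminus\epsilon}$. To handle this, I would argue at the level of the erasing map: from $\sem[\sigma']e\subseteq\sem[\sigma']f$ I get $\erase{\sem[\sigma']e\setminus\epsilon}\subseteq\erase{\sem[\sigma']f\setminus\epsilon}$ by monotonicity of $\erase\argument$ applied pointwise, and the left side equals $\sem[\sigma]e$ exactly because $e$ is one-free. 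For the right side it suffices to observe $\erase{\sem[\sigma']f\setminus\epsilon}\subseteq\sem[\sigma]f$, which is where I expect the main obstacle to lie: I would establish this inclusion by a short induction on $f$, using the distributivity laws for $\erase\argument$ from the proof of \Cref{prop:rm-nil} and noting that the only new case, the constant $1$ with $\sem[\sigma]1=\epsilon$, poses no difficulty since erasing the $\bullet$'s from any word witnessing membership in $\sem[\sigma']f$ lands it in $\sem[\sigma]f$.

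Chaining these gives $\sem[\sigma]e\subseteq\sem[\sigma]f$ for the arbitrary $\sigma$ and $\Sigma$ we fixed, which is exactly $e\lesssim f$, completing the converse. The whole argument is therefore a matter of correctly threading the one-free restriction through the two halves of the biconditional, with the asymmetry between $e$ (one-free) and $f$ (arbitrary) being the one place that requires a genuine, if routine, inductive check rather than a direct appeal to \Cref{prop:rm-nil}.
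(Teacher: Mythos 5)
Your proof is correct and follows the same route as the paper, which simply instantiates \Cref{prop:rm-nil} with $\mathcal X = X$ and declares the corollary straightforward. The one point you rightly flag --- that $f$ may contain $1$ so the proposition's identity does not apply to it directly --- is handled correctly by your observation that the inclusion $\erase{\sem[\sigma']f}\subseteq\sem[\sigma]f$ holds for arbitrary $f\in\Eklc X$, since only the easy direction $\erase{L\cap M}\subseteq\erase L\cap\erase M$ of the intersection law is needed there.
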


\section{Mirror image}
\label{sec:mirror}
In this section, we show a completeness theorem for one-free expressions.
In order to get this result we will use translations between $\Eklcm X$ and $\Eklm {X\times 2}$.
An expression $e\in\Eklcm X$ is \emph{clean}, written $e\in\clean X$, if the mirror operator is only applied to variables.
First, notice that we may restrict ourselves to clean expressions thanks to the following inductive function:
\begin{align*}
  \comb :\Eklcm X \times 2 &\to \Eklcm X
  &&\\
  \tuple{0,b}&\mapsto 0
  &\tuple{e^+,b}&\mapsto \comb\tuple{e,b}^+\\
  \tuple{x,\top}&\mapsto x
  &\tuple{\conv e,\top}&\mapsto \comb\tuple{e,\bot}\\
  \tuple{x,\bot}&\mapsto \conv{x}
  &\tuple{\conv e,\bot}&\mapsto \comb\tuple{e,\top}\\
  \tuple{e+f,b}&\mapsto \comb\tuple{e,b}+\comb\tuple{f,b}
  &  \tuple{e\cdot f,\top}&\mapsto \comb\tuple{e,\top}\cdot \comb\tuple{f,\top}\\
  \tuple{e\cap f,b}&\mapsto \comb\tuple{e,b}\cap \comb\tuple{f,b}
  & \tuple{e\cdot f,\bot}&\mapsto \comb\tuple{f,\bot}\cdot \comb\tuple{e,\bot}.
\end{align*}
We can show by induction on terms the following properties of $\comb$:
\begin{align}
  \forall \tuple{e,b}\in\Eklcm X\times 2,&\;\comb\tuple{e,b}\in\clean X.\label{lem:comb-clean}\\
  \forall e\in\Eklcm X,&\;\comb\tuple{e,\top}\equiv e\text{ and } \comb\tuple{e,\bot}\equiv \conv{e}.\label{lem:comb-ok}
\end{align}
We now define translations between clean expressions and simple expressions:
\begin{itemize}
\item $\uparrow:\clean X\to\Eklm{X\times 2}$ replaces mirrored variables $\conv{x}$ with $\tuple{x,\bot}$ and variables $x$ with $\tuple{x,\top}$;
\item $\downarrow:\Eklm{X\times 2}\to\clean X$ replaces $\tuple{x,\top}$ with $x$ and $\tuple{x,\bot}$ with $\conv{x}$.
\end{itemize}
We can easily show by induction the following properties:
\begin{align}
  \forall e\in\clean X,&\,\downarrow\uparrow e = e.\label{lem:down-up}\\
  \forall e,f \in \Eklm{X\times 2},&\,e \equiv f \Rightarrow \downarrow e \equiv \downarrow f.\label{lem:down-eq}
\end{align}

The last step to obtain the completeness theorem for $\Eklcm X$ is the following claim:
\begin{claim}\label{claim:comb-lang}
  $\forall e,f\in\clean X,\, e\simeq f\Rightarrow\uparrow e \simeq \uparrow f$.
\end{claim}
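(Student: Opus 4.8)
The plan is to isolate the \emph{mirror-compatible} interpretations of $X\times 2$, namely those $\tau$ satisfying $\tau\tuple{x,\bot}=\mirror{\tau\tuple{x,\top}}$, to show that they already witness every semantic difference between simple expressions, and to route $e\simeq f$ through them. The easy half is that for any $\sigma:X\to\Lang{\Sigma'}$ the mirror-compatible $\tau$ with $\tau\tuple{x,\top}=\sigma(x)$ and $\tau\tuple{x,\bot}=\mirror{\sigma(x)}$ satisfies $\sem[\tau]{\uparrow g}=\sem[\sigma]{g}$ for every $g\in\clean X$, by a direct induction: a clean expression applies the mirror only to variables, exactly where $\uparrow$ produces the $\bot$-tagged letters that $\tau$ interprets as mirrors. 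Consequently $e\simeq f$ gives $\sem[\tau]{\uparrow e}=\sem[\tau]{\uparrow f}$ for all mirror-compatible $\tau$; the entire difficulty is to lift this to an \emph{arbitrary} interpretation, under which $\tuple{x,\top}$ and $\tuple{x,\bot}$ are unrelated languages.

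Fix an arbitrary $\tau:X\times 2\to\Lang{\Sigma'}$. Since $\uparrow e$ and $\uparrow f$ are one-free, \Cref{cor:rm-nil} lets me assume $\epsilon\notin\tau\tuple{x,b}$ for every $x$ and $b$. I introduce a fresh letter $\bullet$ and the per-letter bracketing $\beta:\Sigma'^\star\to(\Sigma'\cup\set{\bullet})^\star$, $\beta(c_1\cdots c_n)=\bullet c_1\cdots\bullet c_n$. This $\beta$ is an injective monoid homomorphism onto $F\eqdef(\bullet\Sigma')^\star$, while every mirror $\mirror{\beta(u)}$ lies in $(\Sigma'\bullet)^\star$, a language meeting $F$ only in $\epsilon$. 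I then define a mirror-compatible $\tau^*$ over $\Sigma'\cup\set{\bullet}$ that stores both halves of $\tau$ inside the top component with opposite orientations, by $\tau^*\tuple{x,\top}\eqdef\beta(\tau\tuple{x,\top})\cup\mirror{\beta(\tau\tuple{x,\bot})}$ and $\tau^*\tuple{x,\bot}\eqdef\mirror{\tau^*\tuple{x,\top}}$. The $\epsilon$-freeness then guarantees $\tau^*\tuple{x,b}\cap F=\beta(\tau\tuple{x,b})$ for both values of $b$.

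The technical heart is the identity
\[
  \sem[\tau^*]{g}\cap F=\beta\paren{\sem[\tau]{g}}\qquad\text{for every simple }g,
\]
proved by induction together with the invariant $\sem[\tau^*]{g}\subseteq B^\star$, where $B\eqdef\bullet\Sigma'\cup\Sigma'\bullet$ is a set of length-two blocks admitting unique factorisation. The variable and $+$ cases are immediate. Intersection is the step where a marker-erasing map would break (as \Cref{prop:rm-nil} already illustrates), but here it is free: $\beta$ is injective, hence $\beta(L\cap M)=\beta L\cap\beta M$, and intersecting with the fixed language $F$ trivially commutes with $\cap$. For $\cdot$ and $\argument^+$ I rely on the block structure: because every factor lies in $B^\star$ and, by $\epsilon$-freeness, is nonempty, any factorisation of a word of $F$ must cut at block boundaries and leave each block forward, so all factors stay in $F$ and descend to $\beta$-images. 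I expect this block bookkeeping --- that the forward part survives concatenation and iteration --- to be the main obstacle, with the empty-word hypothesis from \Cref{cor:rm-nil} being exactly what keeps it clean.

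It then remains to assemble the pieces. As $\tau^*$ is mirror-compatible, the easy half gives $\sem[\tau^*]{\uparrow e}=\sem[\tau^*]{\uparrow f}$; intersecting with $F$ and applying the core identity yields $\beta(\sem[\tau]{\uparrow e})=\beta(\sem[\tau]{\uparrow f})$, whence $\sem[\tau]{\uparrow e}=\sem[\tau]{\uparrow f}$ by injectivity of $\beta$. Since $\tau$ ranged over all $\epsilon$-free interpretations, \Cref{cor:rm-nil} upgrades this to $\uparrow e\simeq\uparrow f$, as required.
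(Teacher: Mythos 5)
Your proposal is correct and follows essentially the same route as the paper: your $\beta$ is the paper's $\eta$, your $\tau^*\tuple{x,\top}$ is the paper's $\sigma''(x)$, your map $L\mapsto\beta^{-1}(L\cap F)$ is the paper's $\psi$, and your block invariant $\sem[\tau^*]{g}\subseteq B^\star$ with cuts forced at block boundaries is exactly the paper's set $\valid$ of valid words together with property~\eqref{eq:valid-app}. The only difference is organisational --- you run the induction over simple expressions against a mirror-compatible interpretation of $X\times 2$, whereas the paper runs it over clean expressions against an interpretation of $X$ --- and both correctly route through \Cref{cor:rm-nil} to secure the $\epsilon$-freeness needed to separate the forward and mirrored copies.
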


\begin{lemma}\label{lem:completeness-klcm}
  If \Cref{claim:comb-lang} holds, then
  $\forall e,f\in\Eklcm X,\, e\equiv f \Leftrightarrow e\simeq f$.
\end{lemma}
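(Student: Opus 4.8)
The plan is to dispatch the easy direction by soundness and to obtain completeness by transporting the problem, through the translations $\comb$, $\uparrow$ and $\downarrow$, into the identity-free fragment $\Eklm{X\times 2}$, where \Cref{thm:compl-klm} is available. The left-to-right implication is exactly soundness: the axioms applicable to $\Eklcm X$ (those of \Cref{tab:lattice,tab:concat,tab:mirror}) all denote valid language identities, so $\equiv$ is contained in $\simeq$. I will use this inclusion freely below, in particular to pass from axiomatic equalities to semantic ones.

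For the converse, fix $e,f\in\Eklcm X$ with $e\simeq f$. First I would clean both sides by setting $e'\eqdef\comb\tuple{e,\top}$ and $f'\eqdef\comb\tuple{f,\top}$. By \eqref{lem:comb-clean} these are clean, and by \eqref{lem:comb-ok} we have $e'\equiv e$ and $f'\equiv f$; soundness then gives $e'\simeq e\simeq f\simeq f'$, hence $e'\simeq f'$. Now the assumed \Cref{claim:comb-lang} applies to the clean pair $e',f'$ and yields $\uparrow e'\simeq\uparrow f'$ in $\Eklm{X\times 2}$. Since $X\times 2$ is again a finite set of variables, \Cref{thm:compl-klm} upgrades this to the axiomatic equality $\uparrow e'\equiv\uparrow f'$. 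Transporting it back with \eqref{lem:down-eq} gives $\downarrow\uparrow e'\equiv\downarrow\uparrow f'$, and because $e',f'$ are clean, \eqref{lem:down-up} collapses both sides, leaving $e'\equiv f'$. Chaining this with $e\equiv e'$ and $f'\equiv f$ (again \eqref{lem:comb-ok}, read in the other direction) yields $e\equiv f$, as required.

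Since the genuine combinatorial content lives inside \Cref{claim:comb-lang}, the only care needed in this lemma proper is bookkeeping. I would make sure that each translation property is invoked on expressions of the correct shape — $\uparrow$ and \eqref{lem:down-up} on \emph{clean} expressions, \Cref{thm:compl-klm} and \eqref{lem:down-eq} on \emph{simple} expressions of $\Eklm{X\times 2}$ — and that the relation $\equiv$ obtained after \eqref{lem:down-eq} is the axiomatic equivalence on $\clean X\subseteq\Eklcm X$, so that transitivity of $\equiv$ and of $\simeq$ actually closes the chain. I expect no real obstacle here: the difficulty of the completeness argument for one-free expressions is entirely deferred to \Cref{claim:comb-lang}, and this lemma merely assembles the established facts into the desired equivalence.
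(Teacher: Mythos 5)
Your proof is correct and follows exactly the same chain as the paper's: clean both sides with $\comb\tuple{\argument,\top}$, transfer to $\Eklm{X\times 2}$ via \Cref{claim:comb-lang} and \Cref{thm:compl-klm}, then pull back with \eqref{lem:down-eq}, \eqref{lem:down-up} and \eqref{lem:comb-ok}. The only (harmless) addition is your explicit invocation of \eqref{lem:comb-clean} to justify that \Cref{claim:comb-lang} and \eqref{lem:down-up} apply, which the paper leaves implicit.
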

\begin{proof}
  By soundness, we know that $e\equiv f \Rightarrow e\simeq f$.
  For the converse implication:
  \begin{align*}
    e\simeq f
    &\Rightarrow \comb\tuple{e,\top}\simeq \comb\tuple{f,\top}
      \tag*{By soundness and  \Cref{lem:comb-ok}.}\\
    &\Rightarrow\uparrow \comb\tuple{e,\top} \simeq \uparrow \comb\tuple{f,\top}
      \tag*{By \Cref{claim:comb-lang}.}\\
    &\Rightarrow\uparrow \comb\tuple{e,\top} \equiv \uparrow \comb\tuple{f,\top}
      \tag*{By \Cref{thm:compl-klm}.}\\
    &\Rightarrow\downarrow\uparrow \comb\tuple{e,\top} \equiv \downarrow\uparrow \comb\tuple{f,\top}
      \tag*{By \Cref{lem:down-eq}.}\\
    &\Rightarrow \comb\tuple{e,\top} \equiv \comb\tuple{f,\top}
      \tag*{By \Cref{lem:down-up}.}\\
    &\Rightarrow e \equiv f
      \tag*{By \Cref{lem:comb-ok}.}
  \end{align*}
\end{proof}

Hence, we only need to show \Cref{claim:comb-lang} to conclude.
To that end, we show that for any clean expression $e$, any interpretation of $\uparrow e$ can be obtained by applying some transformation to some interpretation of $e$.
Thanks to \Cref{cor:rm-nil}, we may restrict our attention to interpretation that avoid the empty word. This seemingly mundane restriction turns out to be of significant importance: if the empty word is allowed, the proof of \Cref{lem:rm-conv} becomes much more involved.
More precisely, we prove the following lemma:

\begin{lemma}\label{lem:rm-conv}
  Let $\Sigma$ be some set and $\sigma:X\times 2\to\Lang \Sigma$ some interpretation such that $\forall x, \epsilon\notin \sigma(x)$.
  There exists an alphabet $\Sigma'$, an interpretation $\sigma'':X\to\Lang{\Sigma'}$ and a function $\psi:\Lang{\Sigma'}\to\Lang{\Sigma}$ such that:
  $\forall e\in\clean X,\;\sem{\uparrow e}=\psi\paren{\sem[\sigma'']e}$.
\end{lemma}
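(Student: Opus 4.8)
We have clean expressions over $X$, and the translation $\uparrow$ sends them to simple expressions over $X \times 2$. A clean expression uses $\conv{\argument}$ only on variables, and $\uparrow$ records whether a variable occurs mirrored or not by tagging it with $\bot$ or $\top$. So $\uparrow e$ is a simple expression where a "letter" is a pair $\tuple{x,b}$, and crucially a simple expression has no mirror operator at all. The semantics of $\uparrow e$ under some $\sigma : X \times 2 \to \Lang\Sigma$ treats $\tuple{x,\top}$ and $\tuple{x,\bot}$ as two completely unrelated languages. The content of the lemma is that any such "decoupled" interpretation of $\uparrow e$ can be recovered from a genuine interpretation of $e$ — where $\sem{\conv x}$ is forced to be the mirror of $\sem x$ — by applying a single fixed post-processing function $\psi$.

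**The plan.** The idea is to build $\Sigma'$ so that it carries enough information to remember, letter by letter, whether a symbol "came from" the $\top$-copy or the $\bot$-copy of a variable, and to arrange that mirroring in the $e$-world corresponds to the $\bot$-tag in the $\uparrow e$-world. Concretely, I would let $\Sigma'$ be a disjoint union of two tagged copies of $\Sigma$, say $\Sigma' = \Sigma \times 2$, thinking of $\tuple{a,\top}$ as "a forward letter" and $\tuple{a,\bot}$ as "a backward letter". For the interpretation $\sigma''$ I would set, for each variable $x$,
\[
  \sigma''(x) \eqdef \setcompr{\tuple{a_1,\top}\cdots\tuple{a_n,\top}}{a_1\cdots a_n\in\sigma\tuple{x,\top}} \;\cup\; \setcompr{\tuple{a_n,\bot}\cdots\tuple{a_1,\bot}}{a_1\cdots a_n\in\sigma\tuple{x,\bot}},
\]
so that a word of $\sigma''(x)$ either spells out a word of $\sigma\tuple{x,\top}$ with forward tags, or spells out the mirror of a word of $\sigma\tuple{x,\bot}$ with backward tags. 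The point of this choice is that when we take the mirror $\conv x$ inside $e$ — which $\uparrow$ turns into $\tuple{x,\bot}$ — reversing a forward-tagged $\top$-word does not land back in $\sigma''(x)$, but reversing is exactly what exposes the $\bot$-tagged component. The decoding $\psi$ then strips tags appropriately: on a forward letter $\tuple{a,\top}$ it returns $a$, and on a backward letter it also returns $a$, so that $\psi$ of a $\sigma''$-word reproduces the intended $\Sigma$-word. The map $\psi : \Lang{\Sigma'} \to \Lang{\Sigma}$ is the lifting to languages of this letter-erasing-and-untagging homomorphism.

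**Carrying out the induction.** With $\Sigma'$, $\sigma''$, and $\psi$ fixed independently of $e$, I would prove $\sem{\uparrow e} = \psi\paren{\sem[\sigma'']e}$ by induction on the clean expression $e$. The base cases are $0$, the tagged variables, and the mirrored variables: for $x$, both sides reduce to $\sigma\tuple{x,\top}$ by construction; for $\conv x$ (the only place mirror appears in a clean expression), the left side is $\sem{\tuple{x,\bot}}=\sigma\tuple{x,\bot}$, while on the right $\sem[\sigma'']{\conv x}=\mirror{\sigma''(x)}$ picks out the reversed backward-tagged component, which $\psi$ decodes precisely to $\sigma\tuple{x,\bot}$. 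For this to work $\psi$ must commute with the language operations: it must distribute over $\cup$, over $\cdot$, over $\argument^+$, and — the delicate case — over $\cap$ and over mirror $\mirror{\argument}$. Distribution over $\cup$, $\cdot$, and $\argument^+$ holds because $\psi$ is a homomorphic image under a length-preserving, letter-local untagging map. For mirror, the tag design ensures $\psi\paren{\mirror L}=\mirror{\psi(L)}$ on the relevant languages. The genuinely subtle requirement is $\psi\paren{L\cap M}=\psi(L)\cap\psi(M)$, which, as in \Cref{prop:rm-nil}, fails for arbitrary languages and holds only because of structural constraints on the languages actually arising as $\sem[\sigma'']e$.

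**The main obstacle.** The hard part, exactly as flagged by the remark before the lemma, is making the intersection case go through, and this is where the hypothesis $\epsilon\notin\sigma(x)$ earns its keep. Since every generating language is $\epsilon$-free, every word in $\sem[\sigma'']e$ has positive length and its tag pattern is rigidly determined by the parse, so two words with the same $\psi$-image must in fact carry matching tags position by position; that is what forces $\psi(L\cap M)=\psi(L)\cap\psi(M)$ to hold for these languages. I would therefore isolate a closure/invariance property of the languages $\sem[\sigma'']e$ — an analogue of the upward-closure argument in \Cref{prop:rm-nil} — guaranteeing that distinct words never collapse ambiguously under $\psi$, prove it is preserved by all the operators, and use it precisely in the intersection step. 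Establishing this invariant, and verifying the mirror/intersection interplay under it, is the crux; the remaining operator cases are routine distributivity checks.
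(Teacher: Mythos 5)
Your overall strategy---enrich the alphabet so that reading direction becomes recoverable, decode with a single map $\psi$, and isolate an invariant of the languages $\sem[\sigma'']e$ that pushes the one non-distributing operator through---is exactly the shape of the paper's proof. But your concrete encoding breaks down already at the base cases. If $\psi$ is the homomorphic image under the letter-local untagging map (sending both $\tuple{a,\top}$ and $\tuple{a,\bot}$ to $a$), it untags \emph{every} word of its argument, not just the ``intended'' component: with your $\sigma''$ one gets $\psi\paren{\sigma''(x)}=\sigma\tuple{x,\top}\cup\mirror{\sigma\tuple{x,\bot}}$, which is not $\sem{\uparrow x}=\sigma\tuple{x,\top}$ in general, and likewise $\psi\paren{\mirror{\sigma''(x)}}=\sigma\tuple{x,\bot}\cup\mirror{\sigma\tuple{x,\top}}$ rather than $\sigma\tuple{x,\bot}$. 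The root cause is that a per-letter tag is invariant under word reversal: the mirror of $\tuple{a_1,\top}\cdots\tuple{a_n,\top}$ is again an all-$\top$ word, so no decoder can tell a forward word from a reversed forward word, and the two components of $\sigma''(x)$ contaminate each other as soon as a mirror is applied. This is not fixable by choosing $\psi$ more cleverly: taking $\sigma\tuple{x,\top}=\emptyset$ and $\sigma\tuple{x,\bot}=\set{ba}$ versus $\sigma\tuple{x,\bot}=\set{ab}$ forces the same input language $\set{\tuple{b,\bot}\tuple{a,\bot}}$ to be decoded to two different answers, so no single $\psi$ is compatible with your $\sigma''$.

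The paper's encoding addresses precisely this point: it takes $\Sigma'=\Sigma\cup\set\bullet$ and $\eta(a\,u)=\bullet\,a\,\eta(u)$, so that direction is recorded \emph{positionally} (a marker before each letter) rather than per letter; reversal turns ``$\bullet$ before $a$'' into ``$\bullet$ after $a$'', and the decoder is the \emph{preimage} $\psi(L)\eqdef\setcompr{u}{\eta(u)\in L}$, which automatically discards reversed words---this is exactly where $\epsilon\notin\sigma(x)$ is used, since $\epsilon$ is the only word of the form both $\eta(u_1)$ and $\mirror{\eta(u_2)}$. A secondary consequence of your choice of a direct image is that you have misplaced the difficulty: a preimage distributes over $\cap$ and $\cup$ for free, and the delicate operator in this lemma is \emph{concatenation} (handled in the paper via the ``valid words'' invariant and property \eqref{eq:valid-app}), not intersection as in \Cref{prop:rm-nil}. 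Your proposed intersection invariant would indeed be what a direct-image $\psi$ requires, but since the variable and mirrored-variable cases already fail, the encoding has to be replaced rather than patched.
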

\begin{proof}
  We fix $\Sigma$ and $\sigma:X\times 2\to\Lang \Sigma$ as in the statement.
  Like in the proof of \Cref{prop:rm-nil}, we set $\Sigma'=\Sigma\cup\set{\bullet}$, with $\bullet$ a fresh letter, and write $\erase u$ for the word obtained from $u\in\Sigma'^\star$ by erasing every occurrence of $\bullet$.
  Additionally we define the function $\eta:\Sigma^\star\to\Sigma'^\star$ as follows:
  \begin{align*}
    \eta(\epsilon)\eqdef\epsilon&& \eta(a\,u)\eqdef\bullet \,a\,\eta(u)\quad(\tuple{a,u}\in\Sigma\times\Sigma^\star).\end{align*}
  Clearly, $\erase{\eta(u)}=u$ and $\eta\paren{u\,v}=\eta(u)\,\eta(v)$.
  We may now define $\sigma''$ and $\psi$:
  \begin{align*}
    \sigma''(x)\eqdef\setcompr{\eta(u)}{u\in\sigma\tuple{x,\top}}
    \cup\setcompr{\mirror{\eta(u)}}{u\in\sigma\tuple{x,\bot}}&&
    \psi\paren{L}\eqdef\setcompr{u}{\eta(u)\in L}.\end{align*}
  This is where the restriction $\epsilon\notin\sigma(x)$ comes in.
  Indeed a word $w$ cannot be written both as $w=\eta(u_1)$ and as $w=\mirror{\eta(u_2)}$ unless $w=u_1=u_2=\epsilon$.
  Since $\sigma$ does not contain the empty word, we may show that  $\psi\paren{\sigma''(x)}=\sigma\tuple{x,\top}$ and
  $\psi\paren{\conv{\sigma''(x)}}=\sigma\tuple{x,\bot}$.
  
  $\psi$ distributes over the union and intersection operators.
  However, it does not hold in general that $\psi\paren{L\cdot M}=\psi(L)\cdot\psi(M)$.
  Like in the proof of \Cref{prop:rm-nil} we will therefore identify a predicate on languages that is sufficient for this identity to hold, is satisfied by $\sigma''(x)$, and is stable by $\cdot,\cap,+,\argument^+,\conv\argument$.
  In this case we find that an adequate candidate is ``$L$ contains only valid words'', where the set $\valid$ of valid words is defined as follows:
  \begin{align*}
    \infer{u\in\Sigma^+}{\eta(u)\in\valid}&&
    \infer{u\in\valid}{\mirror u\in\valid}&&
    \infer{u\in\valid\and v\in\valid}{u\,v\in\valid}
    \end{align*}
  Alternatively, the elements of $\valid$ are words over $\Sigma'$ that can be written as a product $\alpha_1\dots\alpha_n$ with $1\leqslant n$ and each $\alpha_i\in\paren{\Sigma\cdot\bullet}\cup \paren{\bullet\cdot\Sigma}$.
  One may see from the definitions that $\sigma''(x)\subseteq \valid$.
  $\valid$ can also be seen to be trivially closed by concatenation and mirror image.
  Since the remaining operators are either idempotent (union and intersection) or derived (iteration), we get that $\sem[\sigma'']e\subseteq \valid$.
  This enables us to conclude thanks to the following property:
  \begin{equation}
    \label{eq:valid-app}
    \forall u_1,u_2\in\valid,\; \eta(u) = u_1\, u_2\Rightarrow
    \exists v_1, v_2: u_1 = \eta(v_1) \wedge u_2 = \eta(v_2) \wedge u = v_1\,v_2.
  \end{equation}
  This property enables us to show that $\psi\paren{L\cdot M}=\psi(L)\cdot\psi(M)$ and $\psi\paren{L^+}=\psi\paren L^+$, for languages of valid words $L,M$.
  Hence we obtain by induction on expressions that for any term $e\in\clean X$, it holds that $\sem[\sigma]{\uparrow e}=\sem[\sigma'']{e}$.
\end{proof}
\begin{theorem}\label{thm:completeness-klcm}
  $\forall e,f\in\Eklcm X,\, e\equiv f \Leftrightarrow e\simeq f$.
\end{theorem}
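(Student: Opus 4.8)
The plan is to read off the theorem from the scaffolding already assembled: by \Cref{lem:completeness-klcm}, the equivalence $e\equiv f\Leftrightarrow e\simeq f$ for all one-free expressions holds as soon as \Cref{claim:comb-lang} is established. Hence all the remaining work reduces to proving that claim, namely that for clean expressions $e,f\in\clean X$ the semantic equivalence $e\simeq f$ is transported by the translation $\uparrow$ into simple expressions over $X\times 2$. Once the claim is in hand, \Cref{lem:completeness-klcm} delivers the theorem verbatim.

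To prove \Cref{claim:comb-lang} I would fix $e,f\in\clean X$ with $e\simeq f$ and aim to show $\uparrow e\simeq\uparrow f$. It suffices to establish the two containments $\uparrow e\lesssim\uparrow f$ and $\uparrow f\lesssim\uparrow e$ separately, and by symmetry I treat only one of them. The crucial observation is that $\uparrow e$ and $\uparrow f$ are \emph{simple} expressions, in particular one-free expressions over the variable alphabet $X\times 2$. This makes \Cref{cor:rm-nil} applicable, with $X\times 2$ playing the role of the variable set: to prove $\uparrow e\lesssim\uparrow f$ it is enough to check the inclusion $\sem{\uparrow e}\subseteq\sem{\uparrow f}$ only for those interpretations $\sigma:X\times 2\to\Lang\Sigma$ that avoid the empty word, i.e. with $\epsilon\notin\sigma(x)$ for every $x$.

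For such an empty-word-free $\sigma$ I would invoke \Cref{lem:rm-conv}, which supplies an alphabet $\Sigma'$, an interpretation $\sigma'':X\to\Lang{\Sigma'}$, and a map $\psi$ with $\sem{\uparrow g}=\psi\paren{\sem[\sigma'']{g}}$ for every clean $g$. In particular $\sem{\uparrow e}=\psi\paren{\sem[\sigma'']{e}}$ and $\sem{\uparrow f}=\psi\paren{\sem[\sigma'']{f}}$. Since $e\simeq f$ holds for \emph{all} interpretations, it holds for $\sigma''$ in particular, so $\sem[\sigma'']{e}=\sem[\sigma'']{f}$; applying $\psi$ to both sides gives $\sem{\uparrow e}=\sem{\uparrow f}$. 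This yields the desired inclusion (indeed an equality) for every empty-word-free $\sigma$, so \Cref{cor:rm-nil} lets me conclude $\uparrow e\lesssim\uparrow f$, and symmetrically the reverse, whence $\uparrow e\simeq\uparrow f$ and \Cref{claim:comb-lang} is proved.

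I do not expect a genuine obstacle at this stage, since the real difficulty has been absorbed into \Cref{lem:rm-conv}, whose proof engineers $\sigma''$ and $\psi$ precisely so that $\psi$ commutes with concatenation and iteration on the languages that actually arise, via the notion of valid words. The only point demanding a little care in the present argument is the bookkeeping around \Cref{cor:rm-nil}: it is stated for one-free expressions over $X$, and one must note both that it applies unchanged with $X\times 2$ as the variable alphabet, and that simple expressions are indeed one-free, so restricting to empty-word-free interpretations is legitimate for $\uparrow e$ and $\uparrow f$.
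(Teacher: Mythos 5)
Your proposal is correct and follows essentially the same route as the paper: reduce to \Cref{claim:comb-lang} via \Cref{lem:completeness-klcm}, use \Cref{cor:rm-nil} (over the alphabet $X\times 2$) to restrict to interpretations avoiding the empty word, and then apply \Cref{lem:rm-conv} together with $\sem[\sigma'']{e}=\sem[\sigma'']{f}$ to conclude. Your explicit splitting into two containments is just a slightly more careful bookkeeping of the same argument, since \Cref{cor:rm-nil} is stated for $\lesssim$.
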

\begin{proof}
  Thanks to \Cref{lem:completeness-klcm}, we only need to check \Cref{claim:comb-lang}.
  Let $e,f$ be two clean expressions such that $e\simeq f$, we want to prove $\uparrow e\simeq \uparrow f$.
  According to \Cref{cor:rm-nil}, we need to compare $\sem{\uparrow e}$ and $\sem{\uparrow f}$ for some $\sigma:X\times 2\to\Lang\Sigma$ such that $\epsilon\notin\bigcup_{x\in X\times 2}\sigma(x)$.
  By \Cref{lem:rm-conv}, we may express these languages as respectively $\psi\paren{\sem[\sigma'']{ e}}$ and $\psi\paren{\sem[\sigma'']{ f}}$.
  Since $e\simeq f$, we get that $\sem[\sigma'']{ e}=\sem[\sigma'']{f}$, thus proving the desired identity and concluding the proof.
\end{proof}
\section{Interlude: tests}\label{sec:tests}
Before we start with the main proof, we define \emph{tests} and establish a few result about them. 
Given a list of variables $u\in X^\star$, we define the term $\test u$ by induction on $u$ as $\test \epsilon \eqdef 1$ and $\test {a\, u}\eqdef a\cap\test u$.
Thanks to the following remark, we will hereafter consider $\test A$ for $A\in\fpset X$: 
\begin{remark}\label{rmk:test-set}
  Let $u,v$ be two lists of variables containing the same letters (meaning a variable appears in $u$ if and only if it appears in $v$). Then $\test u\equiv \test v$.
\end{remark}

The following property explains out choice of terminology: the function $\lambda\sigma.\sem{\test A}$ can be seen as a boolean predicate testing whether the empty word is in each of the $\sigma(a)$ for $a\in A$.
\begin{lemma}\label{lem:sem-test}
  Let $\Sigma$ be some alphabet and $\sigma:X\to\Lang\Sigma$. Then either $\forall a\in A,\;\epsilon\in \sigma(a)$, in which case $\sem{\test A}=\epsilon$, or $\sem{\test A}=\emptyset$.
\end{lemma}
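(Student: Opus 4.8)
The plan is to unfold the definition of $\test A$ and read off the result directly from the semantics, the whole content being that intersecting with the unit language forces a two-valued outcome. First I would use \Cref{rmk:test-set} to fix an arbitrary list $u\in X^\star$ whose underlying set of letters is $A$; since $\test A\equiv\test u$ and $\equiv$ is sound for the language semantics, we have $\sem{\test A}=\sem{\test u}$, so it suffices to establish the claim for $\test u$. This lets me replace the manipulation of a finite set by a clean induction on the list $u$.

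The core is then to prove, by induction on $u$, the closed form
\[
  \sem{\test u}=\paren{\bigcap_{a\in A}\sigma(a)}\cap\set\epsilon .
\]
In the base case $u=\epsilon$ we have $\test\epsilon\eqdef 1$, and $\sem 1=\epsilon=\set\epsilon$, which matches the formula for $A=\emptyset$ (the intersection over an empty index set being the whole of $\Sigma^\star$). In the inductive step $u=a\,v$, the definition gives $\test{a\,v}\eqdef a\cap\test v$, hence $\sem{\test{a\,v}}=\sigma(a)\cap\sem{\test v}$, and the induction hypothesis closes the case. I would in fact phrase the induction so as to carry the dichotomy itself rather than the closed form: if $\sem{\test v}=\emptyset$ then $\sigma(a)\cap\emptyset=\emptyset$ and some variable of $v$ already fails to contain $\epsilon$, so we remain in the second case; and if $\sem{\test v}=\epsilon$, then $\sigma(a)\cap\set\epsilon$ equals $\set\epsilon$ exactly when $\epsilon\in\sigma(a)$ and $\emptyset$ otherwise, which lines up precisely with whether every letter of $a\,v$ contains $\epsilon$.

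The dichotomy is now immediate: because $\set\epsilon$ is a singleton, the set $\paren{\bigcap_{a\in A}\sigma(a)}\cap\set\epsilon$ is a subset of $\set\epsilon$ and therefore equals either $\set\epsilon=\epsilon$ or $\emptyset$; it equals $\epsilon$ exactly when $\epsilon\in\sigma(a)$ for every $a\in A$, and $\emptyset$ otherwise. I do not expect any genuine obstacle here, as the statement is essentially a routine unfolding of the inductive definition of $\test{}$ together with the semantics of $1$ and $\cap$. The only points deserving a moment's care are the empty-intersection convention in the base case (cleanly handled by the vacuous quantification ``$\forall a\in\emptyset$'') and the observation that intersecting with the singleton $\set\epsilon$ collapses the result into $\set{\emptyset,\epsilon}$.
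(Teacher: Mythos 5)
Your proof is correct and is exactly the routine unfolding the paper has in mind (the paper omits the proof of this lemma entirely): reduce to a list via \Cref{rmk:test-set}, induct to get $\sem{\test u}=\bigcap_{a\in A}\sigma(a)\cap\set\epsilon$, and observe that a subset of the singleton $\set\epsilon$ is either $\set\epsilon$ or $\emptyset$. No gaps.
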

\noindent%
Tests satisfy the following universal identities, with $A,B\in\fpset X$ and $e,f\in\Eklc X$:
\begin{align}
  &\test A\leqq 1\label{eq:test-sub-id}\\
  &\test A\cap \test B\equiv\test A\cdot \test B\equiv\test {A\cup B}\\
  &\test A \equiv\test A\cdot \test A\label{eq:5}\\
  &a\in A\Rightarrow\test A \leqq a\label{eq:6}\\
  &\test A\cdot e\equiv e\cdot \test A \\
  &(\test A \cdot e)\cap(\test B \cdot f)\equiv\test {A\cup B} \cdot (e \cap f)\\
  &\test A ^+\equiv \conv{\test A}\equiv \test A.
\end{align}
We now want to compare tests with other tests or with expressions.
Let us define the following interpretation for any finite set $A\in\fpset X$.
\[\begin{array}[t]{crl}
  \sigma_A:&X&\to\Lang\emptyset\footnotemark\\
           &x&\mapsto\left\{
               \begin{array}{ll}
                 \epsilon&\text{if}\;x\in A\\
                 \emptyset&\text{otherwise}.
               \end{array}
                            \right.
\end{array}
\]
\footnotetext{The alphabet here does not matter, since we only want the unit language and the empty language.}
This enables us to establish the following lemma:
\begin{lemma}\label{lem:tests-inf}
  For any $A,B\in\fpset X$, the following are equivalent:
  \begin{multicols}{4}
    \begin{enumerate}[(i)]
    \item\label{item:0} $\epsilon\in\sem[\sigma_A]{\test B}$
    \item\label{item:1} $B\subseteq A$
    \item\label{item:2} $\test A\leqq \test B$
    \item\label{item:3} $\test A\lesssim \test B$.
    \end{enumerate}
  \end{multicols}
\end{lemma}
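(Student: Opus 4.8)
The plan is to establish the four statements as equivalent by proving the cycle $(i) \Rightarrow (ii) \Rightarrow (iii) \Rightarrow (iv) \Rightarrow (i)$. Two of these arrows are bookkeeping around \Cref{lem:sem-test}, one is a short equational derivation, and one is soundness, so I do not expect any genuine obstacle; the only subtlety is to instantiate the interpretation $\sigma_A$ in exactly the right way, since it has been designed precisely so that $\sem[\sigma_A]{\test B}$ detects the inclusion $B\subseteq A$.

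For $(i) \Rightarrow (ii)$ I would unfold \Cref{lem:sem-test} applied to $\sigma_A$: since $\sigma_A(b)$ contains $\epsilon$ exactly when $b \in A$, the hypothesis $\epsilon \in \sem[\sigma_A]{\test B}$ forces $\epsilon\in\sigma_A(b)$ for every $b\in B$, that is $B \subseteq A$; this implication is in fact reversible, so one gets $(i) \Leftrightarrow (ii)$ for free. The step $(iv) \Rightarrow (i)$ is where $\sigma_A$ is used essentially: instantiating the universal containment $\test A \lesssim \test B$ at $\sigma_A$ gives $\sem[\sigma_A]{\test A} \subseteq \sem[\sigma_A]{\test B}$, and since every $a \in A$ satisfies $\epsilon \in \sigma_A(a)$, \Cref{lem:sem-test} tells us $\sem[\sigma_A]{\test A} = \epsilon$. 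Hence $\epsilon$ lies in the left-hand side and therefore in $\sem[\sigma_A]{\test B}$, which is exactly $(i)$.

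The remaining two arrows are purely algebraic. For $(ii) \Rightarrow (iii)$, assuming $B \subseteq A$ we have $A \cup B = A$, so the identity $\test A \cap \test B \equiv \test{A \cup B}$ collapses to $\test A \cap \test B \equiv \test A$; by the characterisation of the order in \eqref{eq:19} this is precisely $\test A \leqq \test B$. (Equivalently one could assemble $\test A \leqq \test B$ from $\test A \leqq 1$ together with $\test A \leqq b$ for each $b \in B$, using \eqref{eq:6} and the meet rule \eqref{eq:18}, but routing through the $A\cup B$ identity is cleaner.) Finally $(iii) \Rightarrow (iv)$ is just soundness: every axiom of \Cref{tab:axioms} holds in language algebra, so any $\equiv$-provable inequality is semantically valid, whence $\test A \leqq \test B$ yields $\test A \lesssim \test B$. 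This closes the cycle and completes the proof.
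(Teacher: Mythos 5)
Your proof is correct and follows essentially the same route as the paper: the same cycle $(i)\Rightarrow(ii)\Rightarrow(iii)\Rightarrow(iv)\Rightarrow(i)$, with the same use of $\sigma_A$ and \Cref{lem:sem-test} for the two semantic arrows and soundness for $(iii)\Rightarrow(iv)$. The only divergence is in $(ii)\Rightarrow(iii)$: the paper argues by induction on the size of $B$ using \Cref{rmk:test-set} and \eqref{eq:test-sub-id}, whereas you invoke the previously stated identity $\test A\cap\test B\equiv\test{A\cup B}$ together with $A\cup B=A$ and \eqref{eq:19}, which is equally legitimate since that identity is listed among the universal test identities before the lemma.
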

\begin{proof}
  Assume \eqref{item:0} holds, i.e. $\epsilon\in\sem[\sigma_A]{\test B}$.
  By \Cref{lem:sem-test} this means that for every $a\in B$ we have $\epsilon\in\sigma_A(a)$ which by definition of $\sigma_A$ ensures that $a\in A$.
  Thus we have shown that \eqref{item:1} holds.
  We show that \eqref{item:1} implies \eqref{item:2} by induction on the size of $B$:
  \begin{itemize}
  \item if $B=\emptyset$, by \Cref{eq:test-sub-id} $\test A\leqq 1=\test\emptyset$.
  \item if $B=\set a\cup B'$ with $a\notin B'$, since $B\subseteq A$ we have $a\in A$ and $B'\subseteq A$.
    By induction hypothesis we know that $\test A\leqq \test B'$.
    By \Cref{rmk:test-set} we get that $\test A\equiv a\cap\test A$.
    Hence we get:
    \[\test A\equiv a \cap \test A\leqq a\cap \test B' = \test B.\]
  \end{itemize}
  Thanks to soundness we have that \eqref{item:2} implies \eqref{item:3}.
  For the last implication, notice that by construction of $\sigma_A$ we have $\epsilon\in\sem[\sigma_A]{\test A}$.
  Therefore if $\test A\lesssim\test B$ then we can conclude that $\epsilon\in\sem[\sigma_A]{\test A}\subseteq\sem[\sigma_A]{\test B}$.
\end{proof}

We now define a function $\interone:\Eklc X\to\fpset{\fpset X}$, whose purpose is to represent as a sum of tests the intersection of an arbitrary expression with $1$:
\begin{align*}
  \interone(0)\eqdef\emptyset&&
  \interone(1)\eqdef\set{\emptyset}&&
  \interone(x)\eqdef\set{\set x}&&
  \interone(e+f)\eqdef \interone(e)\cup \interone(f)&&
  \end{align*}
\begin{align*}
  \interone(e\cdot f)=\interone(e\cap f)\eqdef
  \setcompr{A\cup B}{\tuple{A,B}\in \interone(e)\times \interone(f)}&&
  \interone(e^+)=\interone(\conv{e})\eqdef \interone(e).
\end{align*}

\begin{lemma}\label{lem:interone}
  $\forall e\in\Eklc X,\;1\cap e\equiv\sum_{C\in \interone(e)}\test C$.
\end{lemma}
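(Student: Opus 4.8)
The plan is to proceed by structural induction on $e$, matching at each step the canonical sub-unit $1\cap e$ against the sum of tests $\sum_{C\in\interone(e)}\test C$. Throughout I would treat an indexed sum $\sum_{C\in\mathcal A}\test C$ as well defined only up to $\equiv$: since $+$ is commutative \eqref{ax:plus-com}, associative \eqref{ax:plus-ass} and idempotent \eqref{eq:7}, the value of such a sum depends only on the underlying finite set $\mathcal A$, so I may split sums along unions and discard repetitions freely. This bookkeeping, rather than any single algebraic identity, is the part that needs the most care.

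The base cases are direct. For $e=0$ the empty sum is $0$ and $1\cap 0\equiv 0$ by \eqref{eq:8}; for $e=1$ we have $\interone(1)=\set\emptyset$ and $\test\emptyset=1$, which matches $1\cap 1\equiv 1$ via \eqref{ax:inter-idem}; for $e=x$, the definition of $\test{\argument}$ gives $\test{\set x}=x\cap 1\equiv 1\cap x$ using commutativity of $\cap$. For a union, distributivity \eqref{ax:plus-inter} yields $1\cap(e+f)\equiv(1\cap e)+(1\cap f)$, and the induction hypothesis combined with the sum-splitting remark (to reconcile $\interone(e)\cup\interone(f)$ with the two separate sums) closes the case.

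For concatenation I would first apply \eqref{ax:test-seq-inter} to rewrite $1\cap(e\cdot f)\equiv 1\cap(e\cap f)$; since the definition already sets $\interone(e\cdot f)=\interone(e\cap f)$, this reduces the concatenation case to the intersection case. There I rewrite $1\cap(e\cap f)\equiv(1\cap e)\cap(1\cap f)$ (using idempotency and the associativity--commutativity of $\cap$), apply the induction hypothesis to reach $\paren{\sum_{A\in\interone(e)}\test A}\cap\paren{\sum_{B\in\interone(f)}\test B}$, distribute the outer intersection over both sums, and finally invoke the test identity $\test A\cap\test B\equiv\test{A\cup B}$. This produces $\sum_{\tuple{A,B}}\test{A\cup B}$, which equals $\sum_{C\in\interone(e\cap f)}\test C$ once duplicate unions are absorbed by idempotency.

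The two unary cases reduce almost to single axioms, and the iteration case is where I expect the only genuine subtlety. For the mirror, \eqref{ax:test-conv} gives $1\cap\conv e\equiv 1\cap e$, which matches $\interone(\conv e)=\interone(e)$, so the hypothesis applies verbatim. For iteration, writing $a=1\cap e$ and $b=1\cap e^+$, I unfold $e^+\equiv e+e\cdot e^+$ \eqref{ax:iter-left} and distribute to get $b\equiv a+\paren{1\cap(e\cdot e^+)}$; then \eqref{ax:test-seq-inter} together with idempotency of $\cap$ turns the second summand into $a\cap b$, giving $b\equiv a+(a\cap b)$. The key observation is that \eqref{ax:inter-plus}, read as $(a\cap b)+a\equiv a$, is exactly what absorbs this recursive term, so with commutativity of $+$ we obtain $b\equiv a$, i.e. $1\cap e^+\equiv 1\cap e$; as $\interone(e^+)=\interone(e)$, the hypothesis concludes. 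The hard part, then, is simply recognising that the absorption law \eqref{ax:inter-plus} collapses the star case, and keeping all sum manipulations justified purely by the idempotent commutative monoid laws for $+$.
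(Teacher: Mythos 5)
Your proof is correct: the paper states this lemma without a written proof (deferring to the Coq formalisation), and your structural induction is exactly the expected argument, using \eqref{ax:test-seq-inter} to fold concatenation into intersection, the test identity $\test A\cap\test B\equiv\test{A\cup B}$ for the product cases, and \eqref{ax:test-conv} for the mirror. The iteration case, deriving $1\cap e^+\equiv 1\cap e$ from $b\equiv a+(a\cap b)$ and the absorption law \eqref{ax:inter-plus}, is a valid and nicely economical way to close the one non-trivial case.
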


\begin{corollary}\label{lem:tests-compl}
  $\forall e\in\Eklc X,\forall A\in\fpset X,\; \test A\leqq e \Leftrightarrow \test A\lesssim e$.
\end{corollary}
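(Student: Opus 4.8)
The plan is to obtain the forward implication from soundness and to reduce the converse to the structural description of $1\cap e$ provided by \Cref{lem:interone}. Since every axiom of~\Cref{tab:axioms} is sound for languages, $\test A\leqq e$ entails $\test A\lesssim e$ at once, so all the work is in the reverse direction.

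Assume therefore $\test A\lesssim e$. As $\test A\leqq 1$ holds by~\eqref{eq:test-sub-id}, soundness gives $\sem{\test A}\subseteq\sem 1=\epsilon$ for every interpretation; together with $\sem{\test A}\subseteq\sem e$ this yields $\sem{\test A}\subseteq\sem{1\cap e}$ and hence $\test A\lesssim 1\cap e$. Now \Cref{lem:interone} rewrites $1\cap e$ as the finite sum $\sum_{C\in\interone(e)}\test C$, so by soundness $\test A\lesssim\sum_{C\in\interone(e)}\test C$.

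The decisive step is to instantiate this semantic containment at the tailored interpretation $\sigma_A$. By the construction of $\sigma_A$ we have $\epsilon\in\sem[\sigma_A]{\test A}$, so $\epsilon$ must lie in $\sem[\sigma_A]{\sum_{C}\test C}=\bigcup_{C\in\interone(e)}\sem[\sigma_A]{\test C}$, and thus $\epsilon\in\sem[\sigma_A]{\test C}$ for at least one $C\in\interone(e)$. Feeding this single $C$ into \Cref{lem:tests-inf}—whose equivalence between $\epsilon\in\sem[\sigma_A]{\test C}$ and $\test A\leqq\test C$ is exactly what we need—turns the semantic membership into the syntactic inequality $\test A\leqq\test C$.

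It then remains to close the loop axiomatically. As $\test C$ is one of the summands of $\sum_{C'\in\interone(e)}\test{C'}$, we have $\test C\leqq\sum_{C'}\test{C'}\equiv 1\cap e\leqq e$, the last inequality coming from~\eqref{ax:inter-plus}; transitivity of $\leqq$ then delivers $\test A\leqq\test C\leqq e$. I expect the only genuinely delicate point to be this passage from a universally quantified inclusion to a statement about a single test: the whole purpose of the interpretation $\sigma_A$ is to collapse ``$\test A$ is below some $\test C$'' into a membership of $\epsilon$, which \Cref{lem:tests-inf} already characterises combinatorially, while \Cref{lem:interone} guarantees that the relevant $\test C$ is syntactically dominated by $e$. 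Everything else is routine soundness and lattice manipulation.
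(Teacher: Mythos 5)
Your proof is correct and follows essentially the same route as the paper's: soundness for the forward direction, then passing from $\test A\lesssim e$ to $\test A\lesssim 1\cap e\simeq\sum_{C\in\interone(e)}\test C$ via \Cref{lem:interone}, instantiating at $\sigma_A$ to extract a single $C$ with $\epsilon\in\sem[\sigma_A]{\test C}$, and converting that to $\test A\leqq\test C\leqq 1\cap e\leqq e$ via \Cref{lem:tests-inf}. No gaps.
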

\begin{proof}
  We only need to show the implication from right to left.
  Assume $\test A\lesssim e$.
  This implies $1\cap\test A\lesssim 1\cap e$, and since $\test A\leqq 1$ we know that $1\cap\test A\equiv \test A$ which by soundness implies $\test A\simeq 1\cap \test A$.
  Combining this with \Cref{lem:interone}, we get that
  $\test A\simeq 1\cap \test A\lesssim 1\cap e\simeq\sum_{C\in \interone(e)}\test C$.
  By \Cref{lem:tests-inf}, we know that $\epsilon\in\sem[\sigma_A]{\test A}$, which means that $\epsilon\in\sem[\sigma_A]{\sum_{C\in \interone(e)}\test C}=\bigcup_{C\in \interone(e)}\sem[\sigma_A]{\test C}$.
  Therefore there must be some $B\in\interone(e)$ such that $\epsilon\in\sem[\sigma_A]{\test B}$ which by \Cref{lem:tests-inf} tells us that $\test A\leqq \test B$.
  We may now conclude:
  \[\test A\leqq \test B\leqq\sum_{C\in \interone(e)}\test C\equiv 1\cap e\leqq e.\qedhere\]
\end{proof}

\begin{remark}
  The word ``test'' is reminiscent of Kleene algebra with tests (KAT)\cite{kozenKleeneAlgebraTests1997}.
  Indeed according to \Cref{eq:test-sub-id} our tests are sub-units, like in KAT.
  However unlike in KAT, not every sub-unit is a test.
  Instead here sub-units are in general sums of tests, as can be inferred from \Cref{lem:interone} (because for every sub-unit $e\leqq 1$, we have $e\equiv 1\cap e\equiv\sum_{C\in\interone(e)}\test C$).
\end{remark}
\section{Completeness of reversible Kleene lattices}
\label{sec:rkl}

To tackle this completeness proof, we will proceed in three steps.
Since we already know soundness, and since an equality can be equivalently expressed as a pair of containments, we start from the following statement:
\[\forall e,f\in\Eklc X,\; e\lesssim f \Rightarrow e\leqq f.\]
First, we will show that any expression in $\Eklc X$ can be equivalently written as a sum of terms that are either tests or products $\test A\cdot e$ of a test and a one-free expression.
The case of tests having been dispatched already (\Cref{lem:tests-compl}), this reduces the problem to:
\[\forall e\in\Eklcm X,\,\forall A\in\fpset X,\,\forall f\in\Eklc X,\; \test A\cdot e\lesssim f \Rightarrow \test A\cdot e\leqq f.\]
Second, we will show that for any pair $\tuple{A,f}\in \fpset X\times\Eklc X$, there exists an expression $\reduce A f\in\Eklc X$ such that $\test A\cdot \reduce A f\leqq f$ and whenever $\test A\cdot e\lesssim f$ we have $e\lesssim \reduce A f$.
This further reduces the problem into:
\[\forall e\in\Eklcm X,\,\forall f\in\Eklc X,\; e\lesssim f \Rightarrow e\leqq f.\]
For the third and last step, we show that for any expression $f\in\Eklc X$, there is an expression $[f]\in\Eklcm X$ such that $[f]_A\leqq f$ and whenever $e\lesssim f$ for $e\in\Eklcm X$ we have $e\lesssim [f]$.
This is enough to conclude thanks to \Cref{thm:completeness-klcm}.

In the next three subsections, we introduce constructions and prove lemmas necessary for each step.
Then, in \Cref{sec:main-theorem} we put them all together to show the main result.
\subsection{First step: normal forms}
\label{sec:nf}

A normal form is either an expression of the shape $\test A$ or of the shape $\test A\cdot e$ with $e\in\Eklcm X$.
We denote by $\NF$ the set of normal forms.
The main result of this section is the following:
\begin{lemma}\label{lem:nf}
  For any $e\in\Eklc X$ there exists a finite set $\nf e\subseteq\NF$ such that
  $e\equiv \sum_{\eta\in\nf e}\eta$.
\end{lemma}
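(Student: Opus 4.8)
The plan is to proceed by structural induction on $e\in\Eklc X$, repeatedly using the fact that the one-free expressions $\Eklcm X$ are closed under $+$, $\cdot$, $\cap$, $\argument^+$ and $\conv\argument$, together with the test identities collected just above: $\test A\cdot\test B\equiv\test{A\cup B}$, $\conv{\test A}\equiv\test A$, $\test A^+\equiv\test A$, the commutation $\test A\cdot e\equiv e\cdot\test A$, and $(\test A\cdot e)\cap(\test B\cdot f)\equiv\test{A\cup B}\cdot(e\cap f)$. For the base cases I would take $\nf 0=\emptyset$ (the empty sum), rewrite $1\equiv\test\emptyset$ (a test), and rewrite a variable as $x\equiv\test\emptyset\cdot x$ (a product normal form). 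The case $e+f$ is immediate by setting $\nf{e+f}=\nf e\cup\nf f$ and appealing to commutativity, associativity and idempotence of $+$.

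The multiplicative cases I would handle by distributing over the sums supplied by the induction hypothesis and normalising each resulting pair. For $e\cdot f$ every product of two normal forms collapses to a single one: $\test A\cdot\test B\equiv\test{A\cup B}$, and, commuting tests past everything, $(\test A\cdot g)\cdot(\test B\cdot h)\equiv\test{A\cup B}\cdot(g\cdot h)$ with $g\cdot h\in\Eklcm X$ (the mixed cases being similar). For $e\cap f$ I distribute $\cap$ over $+$ via \eqref{ax:plus-inter}: the intersection of two products is a normal form by $(\test A\cdot g)\cap(\test B\cdot h)\equiv\test{A\cup B}\cdot(g\cap h)$, while if at least one factor is a pure test $\test A$ then the intersection lies below $\test A\leqq 1$, hence is a sub-unit and \Cref{lem:interone} rewrites it as a sum of tests. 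For $\conv e$ I push the mirror through the sum with \eqref{ax:conv-plus} and use $\conv{\test A}\equiv\test A$ together with $\conv{\test A\cdot g}\equiv\conv g\cdot\test A\equiv\test A\cdot\conv g$, where $\conv g\in\Eklcm X$.

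The main obstacle is the iteration $e^+$. Writing the induction hypothesis for $e$ as $T+P$, where $T$ is a sum of tests and $P=\sum_{j}\test{B_j}\cdot g_j$ is a sum of products, I would first peel off the sub-unit part: since $T\equiv 1\cap T$, instantiating \eqref{ax:test-iter} with $f=1$ and using $(1+P)^+\equiv 1+P^+$ from \eqref{eq:12} gives $(T+P)^+\equiv T+P^++T\cdot P^+$. Because $T$ and $T\cdot P^+$ are sums of normal forms as soon as $P^+$ is (multiplying a test through a sum of normal forms again yields normal forms), it remains to normalise $P^+$.

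For $P^+$ I would prove, by induction on the number $k$ of product summands, the strengthened statement that $\paren{h+\sum_{j=1}^{k}\test{B_j}\cdot g_j}^+$ is a sum of normal forms for \emph{every} one-free accumulator $h\in\Eklcm X$. The base case $k=0$ is $h^+\equiv\test\emptyset\cdot h^+$. In the inductive step I peel off the last summand with \eqref{ax:test-iter}, obtaining $g^+ + \test{B_k}\cdot(g+g_k)^+$ with $g=h+\sum_{j<k}\test{B_j}\cdot g_j$; the freed one-free part $g_k$ is absorbed into the accumulator of $(g+g_k)^+$, so both $g^+$ and $(g+g_k)^+$ fall under the hypothesis for $k-1$, and multiplying the latter by $\test{B_k}$ preserves normal forms. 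I expect the delicate point to be precisely this bookkeeping: carrying a one-free accumulator is what makes the product-count decrease strictly and the recursion terminate, whereas a naive peeling of tests would leave the complexity unchanged.
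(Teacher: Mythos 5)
Your proposal is correct and follows essentially the same route as the paper: the same structural induction, the same base cases, and the same pairwise collapsing of normal forms for $\cdot$, for $\cap$ (including the appeal to \Cref{lem:interone} in the test-against-product case), and for $\conv{\argument}$. For $e^+$, your recursive peeling via \eqref{ax:test-iter} with a one-free accumulator unfolds to exactly the subset enumeration $\sum_{S}\test{\cup_{i\in S}A_i}\cdot\paren{\sum_{i\in S}e_i}^+$ that the paper writes down as the definition of $\nf{e^+}$, so you are in effect supplying the correctness argument for the hardest case, which the paper leaves as an exercise.
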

\begin{proof}
  We show by induction on $e$ how to build $\nf e$.
  The correctness of the construction is fairly straightforward, and is left as an exercise : we will only state the relevant proof obligations when appropriate.

  For constants, variables, and unions, the choice is rather obvious:
  \begin{align*}
    \nf 0 \eqdef \emptyset && \nf 1 \eqdef \set{\test\emptyset} &&\nf x\eqdef\set{\test\emptyset\cdot x}&&
    \nf {e+f}\eqdef\nf e\cup \nf f.
  \end{align*}
  The case of mirror image is also rather straightforward:
  \[\nf{\conv{e}}\eqdef\setcompr{\test A}{\test A\in\nf e}\cup\setcompr{\test A\cdot \conv{e'}}{\test A\cdot e'\in\nf e}.\]

  For concatenations, we define the product $\eta\odot \gamma$ of two normal forms $\eta,\gamma\in\NF$ as:
  \begin{equation*}
    \test A\odot\test B\eqdef \test{A\cup B}\quad\quad
    \test A\odot\test B\cdot e\eqdef \test A\cdot e\odot\test B\eqdef \test{A\cup B}\cdot e\quad\quad
    \test A\cdot e\odot\test B\cdot f\eqdef \test{A\cup B}\cdot \paren{e\cdot f}.
  \end{equation*}
  We then define $\nf{e\cdot f}\eqdef\setcompr{\eta\odot \gamma}{\tuple{\eta,\gamma}\in\nf e\times \nf f}$.
  For correctness of the construction, we would have to prove that
  $\forall \eta,\gamma\in\NF,\;\eta\cdot \gamma\equiv \eta\odot\gamma$.

  For intersections, we define $\otimes:\NF\times\NF\to\fpset\NF$:
  \begin{align*}
    \test A\otimes\test B&\eqdef \set{\test{A\cup B}}
    &
    \test A\otimes\test B\cdot e&\eqdef
    \test A\cdot e\otimes\test B\eqdef \setcompr{\test{A\cup B\cup C}}{C\in \interone(e)}\\
    \test A\cdot e\otimes\test B\cdot f&\eqdef \test{A\cup B}\cdot \paren{e\cap f}.
  \end{align*}
  We then define
  $\nf{e\cap f}\eqdef\bigcup_{\tuple{\eta,\gamma}\in\nf e\times \nf f}{\eta\otimes \gamma}$.

  Finally, for iterations we use the following definition:
  \[\nf{e^+}\eqdef\setcompr{\test A}{\test A\in\nf e}
  \cup\setcompr{\test {{\cup}_i A_i}\cdot \paren{\sum_i{e_i}}^+}{\setcompr{\test {A_i}\cdot e_i}{i\leqslant n}\subseteq\nf e}.\qedhere\]
\end{proof}
\begin{remark*}
  In \cite{andrekaEquationalTheoryKleene2011}, a similar lemma was proved (Lemma~3.4).
  However, the proof in that paper is slightly wrong, as it fails to consider that cases  $\test A\cap\test B$ (easy) and $\test A\cap\test B\cdot e$ (more involved).
\end{remark*}
\subsection{Second step: removing tests on the left}
\label{sec:rm-one-left}
Here we want to transform an inequation $\test A\cdot e\lesssim f$, into one one the shape $e\lesssim \reduce A f$, while maintaining that $\test A\cdot\reduce A f\leqq f$.
The construction of $\reduce A f$ is fairly straightforward, the intuition being that $\test A$ forces us to only consider interpretations such that $a\in A\Rightarrow \epsilon\in\sem a$.
Therefore, for any $a\in A$ we replace in $f$ every occurrence of $a$ with $1+a$.
\begin{lemma}\label{lem:reduce-ax}
  $\test A\cdot\reduce A f\leqq f\leqq\reduce A f$.
\end{lemma}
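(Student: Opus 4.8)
The statement packages two containments: the \emph{expanding} direction $f\leqq\reduce A f$ and the \emph{collapsing} direction $\test A\cdot\reduce A f\leqq f$. Throughout, the plan is to read $\reduce A{\argument}$ as the homomorphic substitution sending each $a\in A$ to $1+a$ and fixing every other variable, so that it commutes definitionally with all the operators (for instance $\reduce A{e^+}=\paren{\reduce A e}^+$ and $\reduce A{\conv e}=\conv{\reduce A e}$). I would treat the two directions separately.

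The expanding direction is the easy one. Since $a+\paren{1+a}\equiv 1+a$ by idempotency of $+$ \eqref{eq:7}, we have $a\leqq 1+a=\reduce A a$ for every $a\in A$, while trivially $x\leqq\reduce A x$ for $x\notin A$. As every operation of the signature is monotone with respect to $\leqq$, a routine induction on $f$ (monotonicity of substitution) then yields $f\leqq\reduce A f$.

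For the collapsing direction I would prove the sharper identity $\test A\cdot\reduce A f\equiv\test A\cdot f$ and conclude with $\test A\cdot f\leqq 1\cdot f\equiv f$, using $\test A\leqq 1$ \eqref{eq:test-sub-id} and \eqref{ax:seq-1}. The identity is proved by structural induction on $f$, with congruence of $\equiv$ making the $+$ case immediate. The base case worth noting is a variable $a\in A$, where $\test A\cdot\reduce A a\equiv\test A\cdot(1+a)\equiv\test A+\test A\cdot a$; since $\test A\leqq a$ \eqref{eq:6} gives $\test A\equiv\test A\cdot\test A\leqq\test A\cdot a$ (by idempotency \eqref{eq:5} and monotonicity of $\cdot$), this sum collapses to $\test A\cdot a$, matching $\test A\cdot f$. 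The workhorses of the inductive cases are the centrality of tests (axiom \eqref{ax:test-seq-com}, since $\test A\equiv 1\cap\test A$) and their idempotency \eqref{eq:5}: together they let me freely insert and delete copies of $\test A$ inside a product, so the $\cdot$ case follows by shuttling a test next to each factor and invoking the induction hypothesis twice. The $\cap$ case uses the test-factoring identity $(\test A\cdot e)\cap(\test A\cdot g)\equiv\test A\cdot(e\cap g)$ to split $\test A\cdot\paren{\reduce A{e_1}\cap\reduce A{e_2}}$ into $(\test A\cdot\reduce A{e_1})\cap(\test A\cdot\reduce A{e_2})$, apply the hypothesis to each conjunct, and re-merge. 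The mirror case combines the hypothesis with $\conv{\test A}\equiv\test A$ and \eqref{ax:conv-seq}, then re-centres the test.

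The hard part will be the iteration case, and it is precisely what axiom \eqref{ax:test-iter} exists to absorb. Instantiating \eqref{ax:test-iter} with its first summand set to $0$ (and using $0^+\equiv 0$ \eqref{eq:15}) gives the pleasant identity $\paren{\test A\cdot g}^+\equiv\test A\cdot g^+$ for every $g$. Hence $\test A\cdot\paren{\reduce A{e_1}}^+\equiv\paren{\test A\cdot\reduce A{e_1}}^+$, to which the induction hypothesis $\test A\cdot\reduce A{e_1}\equiv\test A\cdot e_1$ applies under the congruence $\paren{\argument}^+$, producing $\paren{\test A\cdot e_1}^+\equiv\test A\cdot e_1^+$. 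Without this identity one cannot push a test through an unbounded iteration, so \eqref{ax:test-iter} is the load-bearing axiom of the whole argument; every other case is then purely equational bookkeeping.
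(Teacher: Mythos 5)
Your proof is correct and follows essentially the same route as the paper: the same $a\leqq 1+a$ argument for $f\leqq\reduce A f$, and the same key variable computation $\test A\cdot(1+a)\equiv\test A\cdot a$ for the other containment. The paper dismisses the remaining generalisation as ``a simple induction''; your strengthened invariant $\test A\cdot\reduce A f\equiv\test A\cdot f$, together with the use of \eqref{ax:test-iter} (instantiated at $g=0$) for the iteration case, is exactly the bookkeeping needed to make that induction go through.
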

\begin{proof}
  Since $a\leqq 1+a$, we can show by induction that $f\leqq\reduce A f$.
  Also, if $a\in A$:
  \begin{align*}
    \test A\cdot (1+a)
    &\equiv\test A + \test A\cdot a\tag*{By \eqref{ax:seq-1} and \eqref{ax:seq-plus}}\\
    &\equiv\test A \cdot \test A + \test A\cdot a\tag*{By \eqref{eq:5}}\\
    &\equiv\test A\cdot (\test A + a)\tag*{By \eqref{ax:seq-plus}}\\
    &\equiv\test A\cdot a\tag*{By \eqref{eq:6}}.
  \end{align*}
  This proves for the case of variables that $\test A\cdot\reduce A f\leqq f$, and can be generalised to arbitrary expressions by a simple induction.
\end{proof}

For the other property, we rely on the following lemma:
\begin{lemma}\label{lem:reduce-lang}
  Let $\Xi$ be some alphabet, and $\sigma: X\to\Lang\Xi$ be an interpretation such that $\forall x\in X,\;\epsilon\notin\sigma(x)$.
  Then $\sem{\reduce A f}=\sem[\tau]{\reduce A f}$, where
  $\begin{array}[t]{rcl}
     \tau:X&\to&\Lang\Xi\\
     x&\mapsto&\sigma(x)\cup\setcompr{\epsilon}{x\in A}.
   \end{array}$
\end{lemma}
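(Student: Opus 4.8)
The plan is to establish the identity $\sem{\reduce A f}=\sem[\tau]{\reduce A f}$ by a straightforward structural induction on $f$. The key point is that $\reduce A f$ is obtained from $f$ by the occurrence-wise substitution $a\mapsto 1+a$ for each $a\in A$, and that this substitution is a homomorphism: it commutes with every operator of $\Eklc X$. Since $\tau$ and $\sigma$ differ only on the variables of $A$, and since the interpretation of each connective is determined compositionally by the interpretations of its immediate subexpressions, every inductive case is immediate. For instance $\reduce A{(e\cdot g)}=\reduce A e\cdot\reduce A g$, so $\sem{\reduce A{(e\cdot g)}}=\sem{\reduce A e}\cdot\sem{\reduce A g}$, and the two induction hypotheses close the case; the same pattern handles $+$, $\cap$, $\argument^+$ and $\conv\argument$.

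All the content therefore lives in the base cases. For the constants $0$ and $1$ the substitution does nothing and $\sigma,\tau$ agree, so there is nothing to prove; for a variable $x\notin A$ we have $\reduce A x=x$ and $\tau(x)=\sigma(x)$, so both sides equal $\sigma(x)$. The only case carrying any content is a variable $x\in A$, where $\reduce A x=1+x$, and I would compute the chain of equalities
\[
  \sem[\tau]{1+x}=\epsilon\cup\tau(x)=\epsilon\cup\sigma(x)\cup\epsilon=\epsilon\cup\sigma(x)=\sem{1+x},
\]
where the middle step uses $\tau(x)=\sigma(x)\cup\epsilon$. Thus the extra empty word that $\tau$ inserts into $\tau(x)$ is absorbed by the surrounding $1+\argument$, which is exactly why the rewriting $a\mapsto 1+a$ was chosen in the definition of $\reduce A\argument$.

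I expect no genuine obstacle here; the single idea to isolate is that $\reduce A\argument$ shields every occurrence of a variable $a\in A$ behind a $1+\argument$, so inflating $\sigma(a)$ by $\epsilon$ changes nothing. The one point worth flagging is that, somewhat surprisingly, the hypothesis $\epsilon\notin\sigma(x)$ is not used in this computation: the displayed equalities hold whether or not $\sigma(x)$ already contains $\epsilon$. That hypothesis is really there to guarantee, for the later use of the lemma, that $\tau$ satisfies $\epsilon\in\tau(x)\Leftrightarrow x\in A$. An alternative and slightly slicker route would invoke a substitution lemma for language interpretations, observing that $\tau$ is precisely the interpretation $x\mapsto\sem{\reduce A x}$, so that $\sem[\tau]{g}=\sem{\reduce A g}$ for every $g$; applying this with $g=\reduce A f$ and using the semantic idempotence $\sem{\reduce A{\reduce A f}}=\sem{\reduce A f}$ then yields the claim. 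The direct induction above is, however, shorter and entirely self-contained.
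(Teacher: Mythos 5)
Your proof is correct and follows essentially the same route as the paper: a structural induction on $f$ whose only non-trivial case is a variable $x\in A$, resolved by the identical computation $\sem[\tau]{1+x}=\epsilon\cup\sigma(x)\cup\epsilon=\epsilon\cup\sigma(x)=\sem{1+x}$. Your side observation that the hypothesis $\epsilon\notin\sigma(x)$ is not actually needed for this particular computation is accurate (the paper's proof does not use it either).
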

\begin{proof}
  The result follows from a straightforward induction, the only interesting case being that of variables $x\in A$.
  This case is a simple consequence of our definitions: \[\sem[\tau]{1+a}=\epsilon\cup\tau(a)=\epsilon\cup\sigma(a)\cup\epsilon=\epsilon\cup\sigma(a)=\sem{1+a}.\qedhere\]
\end{proof}

\begin{corollary}\label{cor:reduce-spec}
  Let $\tuple{A,e}\in\fpset X\times \Eklcm X$ such that $\test A\cdot e\lesssim f$, then $e\lesssim \reduce A f$.
\end{corollary}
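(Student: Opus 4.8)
The goal is to show that if $\test A\cdot e\lesssim f$ then $e\lesssim\reduce A f$. The strategy is to take an arbitrary interpretation and transform it so that the hypothesis $\test A\cdot e\lesssim f$ can be applied. The key tool is \Cref{cor:rm-nil}: since $e$ is a one-free expression, to prove $e\lesssim\reduce A f$ it suffices to check $\sem e\subseteq\sem{\reduce A f}$ for interpretations $\sigma$ that avoid the empty word, i.e.\ $\epsilon\notin\bigcup_{x\in X}\sigma(x)$. So I would fix such a $\sigma$ and aim to show $\sem e\subseteq\sem{\reduce A f}$.

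\textbf{The core argument.}
Given such an empty-word-free $\sigma$, I would form the augmented interpretation $\tau$ of \Cref{lem:reduce-lang}, namely $\tau(x)\eqdef\sigma(x)\cup\setcompr{\epsilon}{x\in A}$. The point of $\tau$ is that it forces $\epsilon\in\tau(a)$ precisely for $a\in A$, so by \Cref{lem:sem-test} we have $\sem[\tau]{\test A}=\epsilon$. Now I apply the hypothesis $\test A\cdot e\lesssim f$ at the interpretation $\tau$: this gives $\sem[\tau]{\test A\cdot e}\subseteq\sem[\tau]{f}$, and since $\sem[\tau]{\test A}=\epsilon$ the left-hand side is just $\sem[\tau]{e}$. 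Hence $\sem[\tau]{e}\subseteq\sem[\tau]{f}$. The remaining task is to relate these $\tau$-interpretations back to $\sigma$-interpretations of $e$ and $\reduce A f$.

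\textbf{Closing the loop.}
For the right-hand side, \Cref{lem:reduce-lang} directly gives $\sem{\reduce A f}=\sem[\tau]{\reduce A f}$; and combining $\test A\cdot\reduce A f\leqq f\leqq\reduce A f$ from \Cref{lem:reduce-ax} with soundness, $\sem[\tau]{f}\subseteq\sem[\tau]{\reduce A f}=\sem{\reduce A f}$. For the left-hand side I need $\sem{e}\subseteq\sem[\tau]{e}$: since $e$ is one-free and $\tau$ is obtained from $\sigma$ by only \emph{adding} the empty word to some $\sigma(a)$ (i.e.\ $\sigma(x)\subseteq\tau(x)$ pointwise), monotonicity of the language interpretation in the assignment gives $\sem{e}\subseteq\sem[\tau]{e}$. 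Chaining these together,
\[
\sem{e}\subseteq\sem[\tau]{e}\subseteq\sem[\tau]{f}\subseteq\sem[\tau]{\reduce A f}=\sem{\reduce A f},
\]
which is exactly what \Cref{cor:rm-nil} requires in order to conclude $e\lesssim\reduce A f$.

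\textbf{Main obstacle.}
The conceptual crux is recognising that \emph{both} the reduction to empty-word-free interpretations (via \Cref{cor:rm-nil}, which is available only because $e$ is one-free) \emph{and} the augmentation to $\tau$ are needed simultaneously: the hypothesis mentions $\test A$, which is vacuous unless we can install $\epsilon$ into the relevant $\sigma(a)$, yet we must start from an interpretation with no stray empty words so that $\tau$ adds $\epsilon$ exactly where $\test A$ demands and nowhere else. The only genuinely load-bearing new fact beyond the cited lemmas is the monotonicity step $\sem{e}\subseteq\sem[\tau]{e}$, which is a routine induction on the one-free expression $e$ using that every operator in the signature is monotone in the language arguments.
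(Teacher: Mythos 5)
Your proof is correct and follows essentially the same route as the paper's: reduce to empty-word-free interpretations via \Cref{cor:rm-nil}, augment $\sigma$ to the $\tau$ of \Cref{lem:reduce-lang} so that $\sem[\tau]{\test A}=\epsilon$, and chain $\sem{e}\subseteq\sem[\tau]{e}=\sem[\tau]{\test A\cdot e}\subseteq\sem[\tau]{\reduce A f}=\sem{\reduce A f}$ using monotonicity and \Cref{lem:reduce-ax}. The only cosmetic difference is that the paper first upgrades the hypothesis to $\test A\cdot e\lesssim\reduce A f$ before evaluating at $\tau$, whereas you apply it to $f$ and pass to $\reduce A f$ afterwards; the steps are the same.
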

\begin{proof}
  Since by \Cref{lem:reduce-ax} we have $f\leqq \reduce A f$ by soundness and transitivity of $\lesssim$ we have $\test A\cdot e\lesssim \reduce A f$.
  We want to show that $e\lesssim \reduce A f$, so by \Cref{cor:rm-nil} we only need to check that for any interpretation $\sigma:X\to\Lang\Sigma$ such that $\epsilon\notin\bigcup_{x\in X}\sigma(x)$ we have $\sem e \subseteq \sem {\reduce A f}$.
  If we take $\tau$ like in \Cref{lem:reduce-lang}, we get that 1) since for every variable $\sigma(x)\subseteq \tau(x)$, $\sem e\subseteq \sem[\tau]e$ and 2)
  since for every $a\in A$ we have $\epsilon\in\tau(a)$, we get $\sem[\tau]{\test A}=\epsilon$.
  Together these tell us that $\sem e\subseteq\sem[\tau]e=\epsilon\cdot\sem[\tau]e=\sem[\tau]{\test A\cdot e}$.
  Since $\test A\cdot e\lesssim \reduce A f$ we know that $\sem[\tau]{\test A\cdot e}\subseteq\sem[\tau]{\reduce A f}$, and by \Cref{lem:reduce-lang} we know $\sem{\reduce A f}=\sem[\tau]{\reduce A f}$.
  We may therefore conclude that $\sem e\subseteq \sem[\tau]{\test A\cdot e}\subseteq\sem[\tau]{\reduce A f}=\sem{\reduce A f}$.
\end{proof}
\subsection{Third step: removing tests on the right}
\label{sec:rm-one-right}

This last step relies on \Cref{prop:rm-nil} and \Cref{lem:nf}.
\begin{lemma}\label{lem:positive}
  For any expression $f\in \Eklc X$, there exists a one-free expression $[f]\in\Eklcm X$ such that $[f]\leqq f$ and for any one-free expression $e\in\Eklcm X$ such that $e\lesssim f$ we have $e\lesssim [f]$. In other words, $[f]$ is the maximum of the set $\setcompr{e\in\Eklcm X}{e\leqq f}$.
\end{lemma}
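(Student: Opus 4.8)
The plan is to invoke the normal-form decomposition of \Cref{lem:nf} and to keep only the summands carrying a trivial test. Writing $f\equiv\sum_{\eta\in\nf f}\eta$, every element of $\nf f$ is either a test $\test A$ or a product $\test A\cdot g$ with $g\in\Eklcm X$, so I would set
\[[f]\eqdef\sum_{\test\emptyset\cdot g\in\nf f}g\in\Eklcm X,\]
the empty sum being understood as $0$. The intuition is that $\test A$ behaves as $1$ when $A=\emptyset$ but, on interpretations avoiding the empty word, collapses to the empty language when $A\neq\emptyset$; hence the only part of $f$ that a one-free expression can ever reach is precisely this sum of one-free summands.

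For the first property, $[f]\leqq f$, I would use that $\test\emptyset\equiv 1$, so each summand satisfies $g\equiv 1\cdot g\equiv\test\emptyset\cdot g$ by \eqref{ax:seq-1}. Since $\test\emptyset\cdot g$ is one of the summands of $f\equiv\sum_{\eta\in\nf f}\eta$, we get $g\leqq f$, and an application of the join rule \eqref{eq:17} over all such $g$ yields $[f]\leqq f$.

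The second property, $e\lesssim f\Rightarrow e\lesssim[f]$ for one-free $e$, is where the work lies, and it rests on one elementary but crucial observation: if $\sigma$ avoids the empty word (i.e. $\epsilon\notin\sigma(x)$ for all $x$), then $\epsilon\notin\sem h$ for every one-free $h\in\Eklcm X$ — a routine induction on $h$, using that a concatenation, iteration, intersection, or mirror of $\epsilon$-free languages is again $\epsilon$-free. Granting this, by \Cref{cor:rm-nil} (applied to the one-free expression $e$ with right-hand side $[f]$) it suffices to show $\sem e\subseteq\sem{[f]}$ for every $\sigma$ with $\epsilon\notin\bigcup_x\sigma(x)$. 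Fixing such a $\sigma$, \Cref{lem:nf} together with \Cref{lem:sem-test} gives that each normal form $\test A$ or $\test A\cdot g$ with $A\neq\emptyset$ contributes $\emptyset$ to $\sem f$, while $\test\emptyset$ contributes $\epsilon$ and $\test\emptyset\cdot g$ contributes $\sem g$; hence $\sem f\setminus\epsilon=\bigcup_{\test\emptyset\cdot g\in\nf f}\sem g=\sem{[f]}$, the removal of $\epsilon$ being harmless since no $\sem g$ contains it by the observation. Finally $e\lesssim f$ gives $\sem e\subseteq\sem f$, and since $\epsilon\notin\sem e$ (again by the observation) we conclude $\sem e\subseteq\sem f\setminus\epsilon=\sem{[f]}$.

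The maximality statement then comes for free: if $e\in\Eklcm X$ satisfies $e\leqq f$, then soundness gives $e\lesssim f$, hence $e\lesssim[f]$, and \Cref{thm:completeness-klcm} upgrades this to $e\leqq[f]$. The only genuine obstacle is the $\epsilon$-avoidance observation combined with the case analysis of \Cref{lem:sem-test}; once those are in place, everything reduces to a direct application of the normal-form lemma and \Cref{cor:rm-nil}.
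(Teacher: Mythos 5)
Your proposal is correct and follows essentially the same route as the paper: the same definition $[f]\eqdef\sum_{\test\emptyset\cdot g\in\nf f}g$, the same derivation of $[f]\leqq f$, and the same reduction via \Cref{cor:rm-nil} to $\epsilon$-avoiding interpretations, where tests $\test A$ with $A\neq\emptyset$ denote $\emptyset$ and one-free expressions cannot contain $\epsilon$. The only cosmetic difference is that you phrase the semantic step as the set identity $\sem f\setminus\epsilon=\sem{[f]}$ where the paper chases a single word $u\in\sem e$ through the normal form of $f$.
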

\begin{proof}
  We define $[f]\eqdef\sum_{\test \emptyset\cdot f'\in\nf f}f'$.
  We can easily check that $[f]\leqq f$:
  \[[f]\equiv 1\cdot [f]=
    \test \emptyset\cdot\sum_{\test \emptyset\cdot f'\in\nf f}f'
    \equiv \sum_{\test \emptyset\cdot f'\in\nf f}{\test \emptyset\cdot f'}
    \leqq \sum_{\eta\in\nf f}{\eta}
    \equiv f.\]
  For the other property, we rely on \Cref{prop:rm-nil}.
  Assume $e\lesssim f$, we want to show that $e\lesssim[f]$.
  By \Cref{cor:rm-nil}, it is enough to check that $\sem e\subseteq\sem{[f]}$ for interpretations $\sigma$ such that $\forall x\in X,\epsilon\notin\sigma(x)$.
  Let $\sigma$ be such an interpretation, and $u$ some word such that $u\in\sem e$.
  Notice that the condition on $\sigma$ ensures that $\forall x\in X, 1\cap\sigma (x)=\emptyset$, hence $\sem{\test A}\neq\emptyset$ implies that $A=\emptyset$ by \Cref{lem:sem-test}.
  Also, because $\sigma(x)$ never contains the empty word and $e$ does not feature the constant $1$, $u$ must be different from $\epsilon$.
  Since $e\lesssim f$, we already know that $u\in\sem f$.
  By \Cref{lem:nf} and soundness, we know that there is a normal form $\eta\in\nf f$ such that $u\in\sem\eta$.
  Since $u\neq\epsilon$, $\eta$ cannot be a test: that would imply by \eqref{eq:test-sub-id} that $\eta\leqq 1$, hence $\sem\eta\subseteq\sem 1 = \epsilon$.
  Therefore we know that there is a term $\test A\cdot f'\in\nf f$ such that $u\in\sem{\test A\cdot f'}$.
  This means that $u\in\sem {f'}$ and $\epsilon\in\sem{\test A}$.
  As we have noticed before, this means that $A=\emptyset$.
  Thus we get $u\in\sem {f'}$ and $\test \emptyset\cdot f'\in\nf f$, which ensures that $u\in\sem{[f]}$.
\end{proof}

\subsection{Main theorem}
\label{sec:main-theorem}
We may now prove the main result of this paper:
\complrkl
\begin{proof}
  Since $e\equiv f\Leftrightarrow e\leqq f\wedge f\leqq e$ and $e\simeq f\Leftrightarrow e\lesssim f\wedge f\lesssim e$, we focus instead on proving that $e\leqq f\Leftrightarrow e\lesssim f$.
  By soundness we know that $e\leqq f\Rightarrow e\lesssim f$, so we only need to show the converse implication.

  Let $e,f\in\Eklc X$ such that $e\lesssim f$.
  By \Cref{lem:nf} we can show that $e\equiv \sum_{\eta\in\nf e}\eta$.
  Let $\eta\in\nf e$.
  Thanks to the properties of $\lesssim$  we have that $\eta\lesssim f$.
  There are two cases for $\eta$:
  \begin{itemize}
  \item either $\eta=\test A$ for some $A\in\fpset X$, in which case we have $\eta\leqq f$ by \Cref{lem:tests-compl};
  \item or $\eta=\test A\cdot e'$ with $A\in\fpset X$ and $e'\in\Eklcm X$.
    In that case, by \Cref{cor:reduce-spec} we have $e'\lesssim \reduce A f$, and by \Cref{lem:positive} we get $e'\lesssim[\reduce A f]$.
    Since both $e'$ and $[\reduce A f]$ are one-free, we may apply \Cref{thm:completeness-klcm} to get a proof that $e'\leqq [\reduce A f]$.
    Therefore
    \begin{align*}
      \eta=\test A\cdot e'\leqq \test A\cdot[\reduce A f]
      &\leqq \test A\cdot \reduce A f\tag*{By \Cref{lem:positive}.}\\
      &\leqq f\tag*{By \Cref{lem:reduce-ax}.}
    \end{align*}
  \end{itemize}
  In both cases we have established that $\eta\leqq f$, so by monotonicity we show that \[e\equiv \sum_{\eta\in\nf e}\eta\leqq  \sum_{\eta\in\nf e}f\leqq f.\qedhere\]
\end{proof}

\section{The ``top'' problem}
\label{sec:top}

In reversible Kleene lattices, union and intersection form a distributive lattice, and $0$ acts as both the unit of union and the annihilator of intersection.
All that is missing to get a bounded distributive lattice is the unit of intersection and annihilator of union, namely the constant $\top$, to be interpreted as the full language.
However, this turns out to be more complicated than one might think.

The first idea that comes to mind is to add the sole axiom $\top + e = \top$.
This axiom just says that for any expression $e\leqq \top$, and is enough to show that $e\cap\top\equiv \top\cap e\equiv e$.
It is obviously sound, so we get soundness of the resulting axiomatic equivalence.
This axiomatic equivalence can be reduced without too much difficulty to that of reversible Kleene lattices, thanks to the following remark:
\begin{remark}
  If we write $\Etop X$ for expressions with $\top$, let $\phi:\Etop X \to \Eklc {X+1}$ be the function that replaces every occurrence of $\top$ with $\paren{\sum_{a\in X+1}(a+\conv a)}^\star$.
  Then the following identity holds:
  $\forall e,f\in\Etop X,\;e\equiv f\Leftrightarrow \phi(e)\equiv\phi(f).$
\end{remark}

This same construction, when applied to expressions without intersections, yields a completeness proof. 
In the presence of intersection however it is not complete.
We illustrate this with two examples.
\begin{example}[Levi's lemma]
  Levi's lemma for strings~\cite{levi1944semigroups} states that whenever we have two factorisations of the same word, i.e. $u_1\,u_2=v_1\,v_2$, then either $\exists w,\;u_1=v_1\,w\wedge v_2=w\,u_2$ or $\exists w,\;v_1=u_1\,w\wedge u_2=w\,v_2$.
  If we now move from words to languages, it means that every word that can be obtained simultaneously as $L_1\cdot L_2$ and $M_1\cdot M_2$ also belongs to either $L_1\cdot\top\cdot M_2$ or $M_1\cdot\top\cdot L_2$.
  In other words, the following inequation holds:
  \[\paren{e_1\cdot e_2}\cap\paren{f_1\cdot f_2}\lesssim
    \paren{e_1\cdot\top\cdot f_2}+\paren{f_1\cdot\top\cdot e_2}.\]
  However this equation is not derivable.
  This law also contrasts with the properties we can observe in every fragment of this algebra that we have studied: in every case, if a term without $\star$ or $+$ is smaller than a term $e+f$, then it must be smaller than either $e$ or $f$.
  One can plainly see that it is not the case here.
\end{example}
\begin{example}[Factorisation]
  Another troubling example is the following:
  \[\paren{a\cdot b}\cap\paren{a\cdot c}\lesssim
    a\cdot\paren{\paren{\top\cdot b}\cap\paren{\top\cdot c}}.\]
  As before, this inequation is valid, but it is not derivable, and it does not involve unions.
  This suggests that the (in-)equational theory of languages with just the signature $\tuple{\cdot,\cap,\top}$ is already non-trivial.
  We postulate that the key to adding $\top$ to Kleene lattices lies with a better understanding the theory of this smaller signature.
\end{example}
\clearpage
\appendix



\bibliography{bibli}
\end{document}